\newcommand\relphantom[1]{\mathrel{\phantom{#1}}}
\begin{document}

\newtheorem{theorem}{Theorem}[section]
\newtheorem{corollary}[theorem]{Corollary}
\newtheorem{definition}[theorem]{Definition}
\newtheorem{proposition}[theorem]{Proposition}
\newtheorem{lemma}[theorem]{Lemma}
\newtheorem{example}[theorem]{Example}
\newenvironment{proof}{\noindent {\bf Proof.}}{\rule{3mm}{3mm}\par\medskip}
\newcommand{\remark}{\medskip\par\noindent {\bf Remark.~~}}

\title{A Class of Reducible Cyclic Codes and Their Weight Distribution\footnote{This work is supported by the
%National Science Foundation of China
NSFC under Grants  11071160 and 11001170.}}
\author{Yan Liu\footnote{Corresponding author, Dept. of Math., SJTU, Shanghai, 200240,  liuyan0916@sjtu.edu.cn.}, Haode Yan\footnote{Dept. of Math., Shanghai Jiaotong Univ., Shanghai, 200240, hdyan@sjtu.edu.cn.}, Chunlei Liu\footnote{Dept. of Math., Shanghai Jiaotong Univ., Shanghai,
200240, clliu@sjtu.edu.cn.}}

\date{}
\maketitle
\thispagestyle{empty}

\abstract{In this paper, a family of reducible cyclic codes over $\mathbb{F}_{p}$  whose duals have four zeros is presented, where  $p$ is  an odd prime. Furthermore, the weight distribution of these cyclic codes is determined. }

\noindent {\bf Key words and phrases:} cyclic code, quadratic form,  weight distribution.

\noindent {\bf MSC:} 94B15, 11T71.

\section{\small{INTRODUCTION}}
Throughout this paper, let $m \geq 5$ be an odd integer and $k$ be any positive integer  such that  gcd$(m,k)=1$. Let $p$ be  an odd prime and $\pi$ be a primitive element of the finite field $\mathbb{F}_{p^{m}}$.

Recall that an $[n,l,d]$ linear code $\mathcal{C}$ over $\mathbb{F}_{p}$ is a linear subspace of $\mathbb{F}_{p}^{n}$ with dimension $l$ and minimum Hamming distance $d$. Let $A_{i}$ denote the number of codewords  in $\mathcal{C}$ with Hamming weight $i$.
%The   weight enumerator of $\mathcal{C}$ is defined by $1+A_{1}Z+A_{2}Z^{2}+\cdots+A_{n}Z^{n}$.
The sequence $(A_{0}, A_{1}, A_{2},\ldots, A_{n})$ is called the weight distribution of the code $\mathcal{C}$. $\mathcal{C}$ is called cyclic if for any $(c_{0}, c_{1},  \ldots, c_{n-1}) \in \mathcal{C}$, then $(c_{n-1}, c_{0},  \ldots, c_{n-2}) \in \mathcal{C}$.
%By identifying any vector $(c_{0}, c_{1},  \ldots, c_{n-1}) \in \mathbb{F}_{p}^{n}$  with $c_{0}+ c_{1}x+\cdots+c_{n-1}x^{n-1} \in \mathbb{F}_{p}[x]/(x^{n}-1)$, any cyclic code corresponds to an ideal of the
A linear code $\mathcal{C}$ in $\mathbb{F}_{p}^{n}$ is cyclic if and only if $\mathcal{C}$ is an ideal of the polynomial residue class ring $\mathbb{F}_{p}[x]/(x^{n}-1)$. Since $\mathbb{F}_{p}[x]/(x^{n}-1)$ is a principal ideal ring, every cyclic code corresponds to a principal ideal $(g(x))$ of  the multiples of a polynomial $g(x)$ which is the monic polynomial of lowest degree in the ideal. This polynomial  $g(x)$ is called the generator polynomial, and $h(x)=(x^{n}-1)/g(x)$ is called the parity-check polynomial of the code $\mathcal{C}$. We also  recall that a cyclic code is called irreducible if its parity-check polynomial  is irreducible over $\mathbb{F}_{p}$ and reducible, otherwise.
 %And a cyclic code is said to be have $t$ zeros if all the zeros of its generator polynomial  form $t$ different conjugate classes, or equivalently, its generator polynomial has $t$ different irreducible factors over $\mathbb{F}_{p}$.

 %Since the weight distribution of a linear code gives its minimum distance and thus the error capability of it.
 Determining the weight distribution of a linear code is an important research object in coding theory.
 For cyclic codes, the error-correcting capability may not be as good as with some other linear codes in general. However,  because of  their good algebraic
structure, the weight distribution of some cyclic codes can be determined by algebraic techniques, exponential sums for example. Besides, cyclic codes have wide applications in storage and communication systems because they have efficient encoding and decoding algorithms. Therefore, the weight distributions of cyclic codes is not only a problem of theoretical interest, but also of practical importance.
 %have been interesting subjects of study for many years and are very hard problem in general.
For information on the weight distribution of  cyclic codes, the reader is referred to \cite{1,2,3,4,5,6,7,8,9,10,18,12,13,14,15,19,20,21,22}.
%Information on the weight distributions of reducible cyclic codes could be found in \cite{}.
%For the duals of the known cyclic codes whose weight distributions were determined, most of them have at most two zeros, only a few of them have three or more %zeros (see \cite{9,12,18,19,20}).

Let $h_{0}(x)$, $h_{1}(x)$, $h_{2}(x)$ and $h_{3}(x)$ be the minimal polynomials of $\pi^{-1}$, $\pi^{-2}$, $\pi^{-(p^{k}+1)}$ and $\pi^{-(p^{2k}+1)}$ over $\mathbb{F}_{p}$, respectively. It is easy to check that $h_{0}(x)$, $h_{1}(x)$ $h_{2}(x)$ and $h_{3}(x)$ are polynomials of degree $m$ and are pairwise distinct when $m\geq 5$. Define $h(x)=h_{0}(x)h_{1}(x)h_{2}(x)h_{3}(x)$. Then $h(x)$ has degree $4m$ and is a factor of $x^{p^{m}-1}-1$.

Let $\mathcal{C}_{(p,m,k)}$ be the cyclic code with parity-check polynomial $h(x)$. Then $\mathcal{C}_{(p,m,k)}$ has length $p^{m}-1$ and dimension $4m$. Moreover, it can be expressed as \[\mathcal{C}_{(p,m,k)}=\{\mathbf{c}_{(\alpha,\beta,\gamma,\delta)}: \alpha,\beta,\gamma,\delta \in \mathbb{F}_{p^{m}}\},\] where \[\mathbf{c}_{(\alpha,\beta,\gamma,\delta)}=\big(Tr(\alpha\pi^{(p^{2k}+1)t}+\beta\pi^{(p^{k}+1)t}+\gamma\pi^{2t}+\delta\pi^{t})\big)_{t=0}^{p^{m}-2}\] and $Tr$ is the trace map from $\mathbb{F}_{p^{m}}$ to $\mathbb{F}_{p}$.
Let $h'(x)=h_{1}(x)h_{2}(x)h_{3}(x)$ and $\mathcal{C'}$  be the cyclic code with parity-check polynomial $h'(x)$. Then $\mathcal{C'}$ is a subcode of $\mathcal{C}_{(p,m,k)}$ with dimension $3m$. From \cite{22}  and \cite{20}, we can obtain the weight distribution of  $\mathcal{C}'_{(p,m,k)}$  when $m$ is odd. In this paper, we will determine the weight distribution of $\mathcal{C}_{(p,m,k)}$. For doing this, we need to determine the value distribution of the  multi-sets
\begin{equation}\label{Eq:2.1}
\{T(\alpha,\beta,\gamma)=\sum_{x \in  \mathbb{F}_{p^{m}}}\zeta_{p}^{Tr(\alpha x^{p^{2k}+1}+\beta x^{p^{k}+1}+\gamma x^{2})}: \alpha,\beta,\gamma \in  \mathbb{F}_{p}^{m}\}
\end{equation}
and
\begin{equation}\label{Eq:2.2}
\{S(\alpha,\beta,\gamma,\delta)=\sum_{x \in  \mathbb{F}_{p^{m}}}\zeta_{p}^{Tr(\alpha x^{p^{2k}+1}+\beta x^{p^{k}+1}+\gamma x^{2}+\delta x)}:\alpha,\beta,\gamma,\delta \in  \mathbb{F}_{p}^{m}\},
\end{equation}
where $\zeta_{p}=e^{\frac{2\pi i}{p}}$.
The rest of this paper is  organized as follows. Some preliminaries   will be introduced in Section 2. The weight  distribution of the cyclic code $\mathcal{C}_{(p,m,k)}$ will be given in Section 3.

\section{\small{MATHEMATICAL FOUNDATIONS}}
In this section, we first give a brief introduction to the theory of quadratic forms over finite fields.
By  fixing a basis $v_{1}, v_{2},\ldots, v_{m}$ of $\mathbb{F}_{p^{m}}$ over $\mathbb{F}_{p}$, each $x \in \mathbb{F}_{p^{m}}$ can be uniquely expressed as
  \[
  x=x_{1}v_{1}+x_{2}v_{2}+\cdots+x_{m}v_{m},
  \] where $x_{i} \in \mathbb{F}_{p}$ for $1\leq i \leq m$. Then $\mathbb{F}_{p^{m}}$ is isomorphic to the $m$-dimensional linear space $\mathbb{F}_{p}^{m}$.
In other words, we have the following $\mathbb{F}_{p}$-linear isomorphism:
  \[
  x=x_{1}v_{1}+x_{2}v_{2}+\cdots+x_{m}v_{m} \mapsto X=(x_{1}, x_{2},\ldots, x_{m}).
  \]
  %With this isomorphism, a function $f$ from $\mathbb{F}_{p^{m}}$ to $\mathbb{F}_{p}$ induces a function $F$ from $\mathbb{F}_{p}^{m}$ to $\mathbb{F}_{p}$.
  % For $X=(x_{1}, x_{2},\ldots, x_{m})\in \mathbb{F}_{p}^{m}$, let $F(X)=f(x)$, where $x=x_{1}v_{1}+x_{2}v_{2}+\cdots+x_{m}v_{m}$. In this way, the function $f(x)=Tr(\delta x)$ for $\delta \in \mathbb{F}_{p^{m}}$ induces a linear form $F(X)=\sum_{i=1}^{m}Tr(\delta v_{_{i}})x_{i}=A_{\delta}X'$, where $A_{\delta}=(Tr(\delta v_{_{1}}), Tr(\delta v_{_{2}}),\ldots, Tr(\delta v_{_{m}}))$ .
  %Then $F$ is called a quadratic form over $\mathbb{F}_{p}$ if its induced function $F$ is a polynomial  of degree two over $\mathbb{F}_{p}$ and  can be represented as
%\[F(X)=\sum_{1\leq i\leq j\leq m}a_{ij}x_{i}x_{j},
%\]
%where $a_{ij} \in \mathbb{F}_{p}$.
%\begin{definition}
%By fixing a basis $\{\varepsilon_{1}, \varepsilon_{1}, \ldots, \varepsilon_{m}\}$ of $\mathbb{F}_{p^{m}}$ over $\mathbb{F}_{p}$ and identifying $x=\sum_{i=1}^{m}x_{i}\varepsilon_{i}$ with the vector $(x_{1}, x_{2},\ldots, x_{m})\in \mathbb{F}_{p}^{m}$, a function $Q(x)$ from $\mathbb{F}_{p^{m}}$ to $\mathbb{F}_{p}$ is called a quadratic form over $\mathbb{F}_{p}$ if it
%\[Q(x)=Q\big(\sum_{i=1}^{m}x_{i}\varepsilon_{i}\big)=\sum_{1\leq i\leq j\leq m}a_{ij}x_{i}x_{j},
%\]
%where $a_{ij} \in \mathbb{F}_{p}$.
%\end{definition}
For a quadratic form  $F$, there exists a symmetric matrix $H$ of order $m$ over $\mathbb{F}_{p}$ such that $F(X)= XHX'$, where $X=(x_{1}, x_{2},\ldots,x_{m})\in \mathbb{F}_{p}^{m}$  and $X'$ denotes the transpose of $X$.
The rank of the quadratic form $F$ is defined as the codimension of the $\mathbb{F}_{p}$-vector space $V=\{x \in \mathbb{F}_{p^{m}}: f(x+z)-f(x)-f(z)=0~ for~ all ~z \in \mathbb{F}_{p^{m}} \}$, i.e., the rank of $H$.
 Then there exists a nonsingular matrix $M$ of order $m$ over $\mathbb{F}_{p}$ such that $MHM'$ is a diagonal matrix (\cite{11}). Under the nonsingular linear substitution $X=ZM$ with $Z=(z_{1}, z_{2},\ldots,z_{m})\in \mathbb{F}_{p}^{m}$, then $F(X)=ZMHM'Z'=\sum_{i=1}^{r}d_{i}z_{i}^{2}$, where $r$ is the rank of $F(X)$ and $d_{i} \in \mathbb{F}_{p}^{\ast}$.   We can recall that the Legendre symbol $(\frac{a}{p})$ has the value 1 if $a$ is a quadratic residue mod $p$, $-1$ if $a$ is a quadratic nonresidue mod $p$, and zero if $p | a$. Let $\triangle= d_{1}d_{2}\cdots d_{r}$ (we assume $\triangle=0$ when $r=0$). Then the $(\frac{\triangle}{p})$ is an invariant of $H$ under the action of $M \in GL_{m}(\mathbb{F}_{p})$.

For any fixed $(\alpha,\beta,\gamma) \in \mathbb{F}_{p^m}^{3}$,  let $Q_{\alpha,\beta,\gamma}(x)=Tr(\alpha x^{p^{2k}+1}+\beta x^{p^{k}+1}+\gamma x^{2} )$. Then its induced quadratic form is
\[\begin{split}
F_{\alpha,\beta,\gamma}(X)&=Tr(\alpha(\sum_{i=1}^{m}x_{i}v_{i})^{p^{2k}+1}+\beta(\sum_{i=1}^{m}x_{i}v_{i})^{p^{k}+1}+\gamma
      (\sum_{i=1}^{m}x_{i}v_{i})^{2})\\
    &=\sum_{i,j=1}^{m}Tr(\alpha v_{i}^{p^{2k}+1}v_{j}+\beta v_{i}^{p^{k}+1}v_{j}+\gamma v_{i}v_{j})x_{i}x_{j}\\
    &=XH_{\alpha,\beta,\gamma}X',
\end{split}
\]
where $H_{\alpha,\beta,\gamma}=(h_{i,j})$ and
\[h_{i,j}=\frac{1}{2}Tr(\alpha (v_{i}^{p^{2k}+1}v_{j}+v_{i}v_{j}^{p^{2k}+1})+\beta (v_{i}^{p^{k}+1}v_{j}+v_{i}v_{j}^{p^{k}+1}))+\gamma v_{i}v_{j},~~1\leq i, j\leq m.
\]
Then we have the following:
\begin{equation*}
T(\alpha,\beta,\gamma)=\sum_{x \in  \mathbb{F}_{p^{m}}}\zeta_{p}^{Tr(\alpha x^{p^{2k}+1}+\beta x^{p^{k}+1}+\gamma x^{2})}=\sum_{X \in  \mathbb{F}_{p}^{m}}\zeta_{p}^{XH_{\alpha,\beta,\gamma}X'}
\end{equation*}
and
\begin{equation*}
S(\alpha,\beta,\gamma,\delta)=\sum_{x \in  \mathbb{F}_{p^{m}}}\zeta_{p}^{Tr(\alpha x^{p^{2k}+1}+\beta x^{p^{k}+1}+\gamma x^{2}+\delta x)}=\sum_{X \in  \mathbb{F}_{p}^{m}}\zeta_{p}^{XH_{\alpha,\beta,\gamma}X'+A_{\delta}X'},
\end{equation*} where $A_{\delta}=(Tr(\delta v_{_{1}}), Tr(\delta v_{_{2}}),\ldots, Tr(\delta v_{_{m}}))$.
Hence, in order to determine the  value distribution of the two multi-sets
\[
\{T(\alpha,\beta,\gamma)=\sum_{x \in  \mathbb{F}_{p^{m}}}\zeta_{p}^{Tr(\alpha x^{p^{2k}+1}+\beta x^{p^{k}+1}+\gamma x^{2})}: \alpha,\beta,\gamma \in  \mathbb{F}_{p}^{m}\}
\]
and
\[
\{S(\alpha,\beta,\gamma,\delta)=\sum_{x \in  \mathbb{F}_{p^{m}}}\zeta_{p}^{Tr(\alpha x^{p^{2k}+1}+\beta x^{p^{k}+1}+\gamma x^{2}+\delta x)}:\alpha,\beta,\gamma,\delta \in  \mathbb{F}_{p}^{m}\},
\]
%$T(\alpha,\beta,\gamma)$ and $S(\alpha,\beta,\gamma,\delta)$,
%we need to determine the rank of $H_{\alpha,\beta,\gamma}$ over $\mathbb{F}_{p}$ and the solvability of $\mathbb{F}_{p}$-linear equation $2YH_{\alpha,\beta,\gamma}+A_{\delta}=0$.
we need the following lemmas.
\begin{lemma}[\cite{9}]\label{Le:2.1}
Let $F(X)=XHX'$ be a quadratic form in $m$ variables of rank $r$ over $\mathbb{F}_{p}$, then
 \begin{enumerate}
  \item  $\sum_{X \in  \mathbb{F}_{p}^{m}}\zeta_{p}^{F(X)}=
           (\frac{\triangle}{p})(p^{\ast})^{\frac{r}{2}}p^{m-r}$, where $p^{\ast}=(-1)^{\frac{p-1}{2}}p$;\\
  \item for $A=(a_{1}, a_{2}, \ldots, a_{m}) \in \mathbb{F}_{p}^{m}$, if $2YH+A=0$ has a solution $Y=B\in \mathbb{F}_{p}^{m}$, then $\sum_{X \in \mathbb{F}_{p}^{m}}\zeta_{p}^{F(X)+AX'}=\zeta_{p}^{t}\sum_{X \in \in \mathbb{F}_{p}^{m}}\zeta_{p}^{F(X)}$, where $t=\frac{1}{2}AB' \in \mathbb{F}_{p}$, otherwise $\sum_{X \in \in \mathbb{F}_{p}^{m}}\zeta_{p}^{F(X)+AX'}=0$.
 \end{enumerate}

\end{lemma}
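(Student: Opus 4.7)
The plan is to reduce both parts to the diagonal case via the nonsingular substitution $X=ZM$ already introduced in Section~2, which transforms $F(X)$ into $\sum_{i=1}^{r}d_i z_i^{2}$ with $d_1,\dots,d_r\in\mathbb{F}_p^{\ast}$. Since $M\in GL_m(\mathbb{F}_p)$ induces a bijection on $\mathbb{F}_p^{m}$, every exponential sum in the statement is preserved under this change of variable.

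For part~(1), after diagonalizing, the sum splits as
\[
\sum_{X\in\mathbb{F}_p^{m}}\zeta_p^{F(X)} \;=\; p^{m-r}\prod_{i=1}^{r}\Bigl(\sum_{z\in\mathbb{F}_p}\zeta_p^{d_i z^{2}}\Bigr),
\]
the factor $p^{m-r}$ coming from the $m-r$ coordinates absent from the diagonalized $F$. I would then invoke the classical Gauss-sum evaluation $\sum_{z\in\mathbb{F}_p}\zeta_p^{d z^{2}}=(\frac{d}{p})G$ with $G=\sum_{z\in\mathbb{F}_p}\zeta_p^{z^{2}}$ and $G^{2}=p^{\ast}$. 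Taking the product of these factors and using multiplicativity of the Legendre symbol delivers $(\frac{\triangle}{p})(p^{\ast})^{r/2}p^{m-r}$, as required.

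For part~(2), the same substitution turns $AX'$ into $\tilde{A}Z'$ with $\tilde{A}=AM'=(\tilde{a}_1,\dots,\tilde{a}_m)$. Transporting the system $2YH+A=0$ through the substitution shows that it is solvable if and only if $\tilde{a}_{r+1}=\cdots=\tilde{a}_m=0$. If some $\tilde{a}_i\neq 0$ with $i>r$, then the inner sum $\sum_{z_i}\zeta_p^{\tilde{a}_i z_i}$ vanishes, killing the whole product and establishing the second alternative. Otherwise, completing the square in each of the $r$ non-null coordinates produces a phase factor $\zeta_p^{-\sum_{i\le r}\tilde{a}_i^{2}/(4d_i)}$ in front of the part-(1) value.

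The main obstacle is identifying this phase factor with $\zeta_p^{t}$, $t=\tfrac{1}{2}AB'$, in a coordinate-free way, so that the answer does not depend on the choice of diagonalizing $M$. I plan to argue this intrinsically: for any solution $B$ of $2BH+A=0$, substitute $X=Y+B$ into $F(X)+AX'$. Symmetry of $H$ gives $F(Y+B)=F(Y)+2YHB'+F(B)$, and the hypothesis $2BH=-A$, equivalently $2HB'=-A'$, makes the linear-in-$Y$ term $Y(2HB'+A')$ vanish. The surviving constant is $F(B)+AB'=\tfrac{1}{2}B(2HB')+AB'=-\tfrac{1}{2}AB'+AB'=\tfrac{1}{2}AB'=t$. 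Since the shift is a bijection on $\mathbb{F}_p^{m}$, the sum equals $\zeta_p^{t}$ times $\sum_Y\zeta_p^{F(Y)}$, which by part~(1) is the claimed value, completing the proof.
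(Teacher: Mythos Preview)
Your argument is correct and is exactly the standard proof of this classical fact: diagonalize and separate variables for part~(1), and for part~(2) use the substitution $X\mapsto Y+B$ together with the diagonal picture to handle the non-solvable case. Note, however, that the paper does not supply its own proof of this lemma; it is quoted from \cite{9} as a known result, so there is no ``paper's proof'' to compare against. Your write-up would serve perfectly well as a self-contained justification should one be desired.
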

\begin{lemma}[\cite{20,22}]\label{Le:2.2}
\begin{enumerate}
\item For any $(\alpha,\beta,\gamma) \in \mathbb{F}_{p^{m}}^{3}\backslash \{(0,0,0)\}$,  $rank (H_{\alpha,\beta,\gamma})$ is at least $m-4$.\\
\item Let $n_{i}$ be the number of $(\alpha,\beta,\gamma)$ such that $r_{\alpha,\beta,\gamma}=m-i$ for $0\leq i \leq 4$. Then
    \[
    n_{1}=\frac{p^{m+1}(p^{2m}-p^{2m-2}-p^{2m-3}+p^{m-2}+p^{m-3}-1)}{p^{2}-1}
    \] and \[
    n_{3}=\frac{p^{m-3}(p^{m-1}-1)(p^{m}-1)}{p^{2}-1}.
    \]
\end{enumerate}
\end{lemma}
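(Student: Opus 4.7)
The plan splits into proving the rank lower bound (1) and the explicit counts (2). Both rest on the identification $\mathrm{rank}(H_{\alpha,\beta,\gamma}) = m - \dim_{\mathbb{F}_p}V_{\alpha,\beta,\gamma}$, where $V_{\alpha,\beta,\gamma}$ denotes the radical of the bilinear form attached to $Q_{\alpha,\beta,\gamma}(x)=Tr(\alpha x^{p^{2k}+1}+\beta x^{p^{k}+1}+\gamma x^{2})$.

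For (1), I would first unfold $Q_{\alpha,\beta,\gamma}(x+z)-Q_{\alpha,\beta,\gamma}(x)-Q_{\alpha,\beta,\gamma}(z)$ and apply the identity $Tr(ab^{p^{j}})=Tr(a^{p^{-j}}b)$ to put it in the form $Tr(z\,L(x))$, where
\[
L(x)=\alpha^{p^{-2k}}x^{p^{-2k}}+\beta^{p^{-k}}x^{p^{-k}}+2\gamma x+\beta x^{p^{k}}+\alpha x^{p^{2k}}.
\]
Then $V_{\alpha,\beta,\gamma}$ is the set of zeros of $L$ in $\mathbb{F}_{p^{m}}$. Raising $L(x)=0$ to the $p^{2k}$ power, the equation becomes a nonzero polynomial of degree $4$ in the Frobenius iterate $\psi:=\phi^{k}$ applied to $x$. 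Because $\gcd(k,m)=1$, $\psi$ generates $\mathrm{Gal}(\mathbb{F}_{p^{m}}/\mathbb{F}_p)$ with fixed field $\mathbb{F}_p$, so by a standard bound on kernels in the skew-polynomial ring $\mathbb{F}_{p^{m}}\{T;\psi\}$ (Lam--Leroy type), $\dim_{\mathbb{F}_p}V_{\alpha,\beta,\gamma}\le 4$ whenever $(\alpha,\beta,\gamma)\neq(0,0,0)$; this proves (1).

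For (2), the plan is the moment method on $T(\alpha,\beta,\gamma)$. By Lemma~\ref{Le:2.1}(1), $|T(\alpha,\beta,\gamma)|^{2}=p^{m+i}$ whenever $r_{\alpha,\beta,\gamma}=m-i$, so isolating the contribution of $(0,0,0)$ gives
\[
\sum_{\alpha,\beta,\gamma\in\mathbb{F}_{p^{m}}}|T(\alpha,\beta,\gamma)|^{2}=\sum_{i=0}^{4}n_i\,p^{m+i}+p^{2m}.
\]
On the other hand, expanding the square and swapping the orders of summation over $(\alpha,\beta,\gamma)$ and $(x,y)$ yields $p^{3m}N$, where $N$ counts the pairs $(x,y)\in\mathbb{F}_{p^{m}}^{2}$ simultaneously satisfying $x^{p^{2k}+1}=y^{p^{2k}+1}$, $x^{p^{k}+1}=y^{p^{k}+1}$, and $x^{2}=y^{2}$. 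A parallel computation of the fourth moment $\sum|T|^{4}$, combined with the constraint $n_{0}+n_{1}+n_{2}+n_{3}+n_{4}=p^{3m}-1$ coming from (1), furnishes a linear system in $n_{0},\ldots,n_{4}$ whose unique solution delivers $n_{1}$ and $n_{3}$ (and $n_{0},n_{2},n_{4}$ as a byproduct).

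The main obstacle is the combinatorial evaluation of $N$ and its fourth-moment analogue. Both reduce to understanding the intersections in $\mathbb{F}_{p^{m}}^{\ast}$ of the subgroups cut out by $u^{p^{2k}-1}=1$, $u^{p^{k}-1}=1$, and $u^{2}=1$ (where $u=x/y$); their orders are the gcds $\gcd(p^{2k}-1,p^{m}-1)$, $\gcd(p^{k}-1,p^{m}-1)$, and $\gcd(2,p^{m}-1)=2$. The coprimality $\gcd(k,m)=1$ and the odd parity of $m$ conspire to collapse the first two to $p-1$, making the count tractable. Once $N$ and the fourth-moment count are in hand, the linear algebra is immediate and yields the stated formulas.
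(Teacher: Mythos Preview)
Your argument for part~(1) is correct and is essentially the standard one: the radical of the associated bilinear form is the zero set in $\mathbb{F}_{p^{m}}$ of the $\mathbb{F}_{p}$-linearized polynomial $L$, and after composing with $\phi^{2k}$ this becomes a $\psi$-polynomial of degree at most $4$ with $\psi=\phi^{k}$; since $\gcd(k,m)=1$, the fixed field of $\psi$ on $\mathbb{F}_{p^{m}}$ is $\mathbb{F}_{p}$ and the usual kernel bound gives $\dim_{\mathbb{F}_{p}}V_{\alpha,\beta,\gamma}\le 4$.

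Part~(2), however, has a genuine gap. Your proposed linear system consists of only three equations---the total count $\sum_{i}n_{i}=p^{3m}-1$, the second absolute moment $\sum_{i}n_{i}\,p^{m+i}=p^{3m}(2p^{m}-1)-p^{2m}$, and the fourth absolute moment---in the five unknowns $n_{0},\dots,n_{4}$. This system is underdetermined and cannot have a unique solution; in particular it cannot single out $n_{1}$ and $n_{3}$. Adding the signed moments $\sum T^{2}$ and $\sum T^{4}$ from Lemma~\ref{Le:2.3} does not help either: when $p\equiv 1\pmod 4$ one has $T^{2}=|T|^{2}$, so they duplicate your equations, and when $p\equiv 3\pmod 4$ the identities $T^{2}=(-1)^{m-i}|T|^{2}$ make the new equations linear combinations of the old ones and the parity split $n_{0}+n_{2}+n_{4}$ versus $n_{1}+n_{3}$, still leaving the system rank-deficient.

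This is exactly why the paper does \emph{not} attempt to recover $n_{1}$ and $n_{3}$ from moments. It imports them from \cite{20,22} and then, in Theorem~\ref{Th:3.1}, uses the two moment identities of Lemma~\ref{Le:2.3} together with the total count to solve a $3\times 3$ system for the remaining unknowns $n_{0},n_{2},n_{4}$. The proofs in \cite{20,22} of the values of $n_{1}$ and $n_{3}$ proceed instead by a direct analysis of when the linearized polynomial $\alpha x+\beta^{p^{k}}x^{p^{k}}+(2\gamma)^{p^{2k}}x^{p^{2k}}+\beta^{p^{2k}}x^{p^{3k}}+\alpha^{p^{2k}}x^{p^{4k}}$ has kernel of $\mathbb{F}_{p}$-dimension exactly $1$ or exactly $3$, counting the admissible triples $(\alpha,\beta,\gamma)$ via the structure of its roots. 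If you want an independent proof of part~(2), you will need either such a direct kernel analysis or higher moments (e.g.\ $\sum|T|^{6}$, $\sum|T|^{8}$), the latter requiring the solution counts $N_{6}$ and $N_{8}$ of systems analogous to those in Lemmas~\ref{Le:2.4} and~\ref{Le:2.5}, which are considerably more intricate.
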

%If $SQ$ and $\overline{SQ}$ denote   the set of  nonzero square elements and the set of non-square elements  of $\mathbb{F}_{p^{m}}$, respectively, then %$\lambda x$ runs through all non-squares of $\mathbb{F}_{p^{m}}$ as $x$ runs through $SQ$ and $\lambda x$ runs through all nonzero squares of %$\mathbb{F}_{p^{m}}$ as $x$ runs through $\overline{SQ}$. In addition, we have the following result.
%\begin{lemma}[\cite{19}]\label{Le:3.1}
%For any non-square element $\gamma \in \mathbb{F}_{p^{m}}$, we have $\gamma^{(1+p^{k})/2}=\gamma$ if $k$ is even, and $\gamma^{(1+p^{k})/2}=-\gamma$ %otherwise.
%\end{lemma}
%To determine the multiplicity of each value  of $T(\alpha,\beta,\gamma)$ and $S(\alpha,\beta,\gamma,\delta)$, we need the following result:
The following result is important to determine the multiplicity of each value  of $T(\alpha,\beta,\gamma)$.
\begin{lemma}\label{Le:2.3}
For the exponential sum $T(\alpha,\beta,\gamma)$, we have
\[\sum_{\alpha,\beta,\gamma \in \mathbb{F}_{p^m}}T^{2}(\alpha,\beta,\gamma)=\begin{cases}
         p^{3m}(2p^{m}-1),& q \equiv 1\pmod4,\\
         p^{3m},& q \equiv 3\pmod4,
        \end{cases}\]
        and
 \[\sum_{\alpha,\beta,\gamma \in \mathbb{F}_{p^m}}T^{4}(\alpha,\beta,\gamma)=\begin{cases}
p^{3m}((2p^{m}-1)^{2}+(p-1)(p^{m}-1)(2p^{m}-p-1)),& q \equiv 1\pmod4,\\
p^{3m}(1+(p+1)(p^{m}-1)(2p^{m}-p+1)),& q \equiv 3\pmod4.
\end{cases}\]
%\\
%\item let N be a subset of $ \mathbb{F}_{p^m}^{3}$, then
%\[
%\sum_{(\alpha,\beta,\gamma) \in N, \delta \in \mathbb{F}_{p^m}}S(\alpha,\beta,\gamma,\delta)=p^{m}\cdot \mid N\mid.
%\]
\end{lemma}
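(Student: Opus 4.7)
The plan is to reduce both moments to counting problems by swapping the order of summation and using character orthogonality, then to evaluate the counts using the structure of the three power-sum conditions. For the second moment, the swap yields $\sum T^{2} = p^{3m} M_{2}$, where $M_{2}$ counts pairs $(x, y) \in \mathbb{F}_{p^m}^{2}$ satisfying $x^{p^{j}+1} + y^{p^{j}+1} = 0$ for each $j \in \{0, k, 2k\}$. The condition $x^{2} + y^{2} = 0$ with $(x, y) \neq (0, 0)$ forces $(y/x)^{2} = -1$; since $m$ is odd, such a square root lies in $\mathbb{F}_{p^m}$ iff $p \equiv 1 \pmod 4$, and in that case already lies in $\mathbb{F}_p$. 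Writing $\iota$ for such a root, $\iota^{p^k} = \iota$ gives $\iota^{p^k+1} = \iota^2 = -1$, so the remaining two conditions are automatic. The count is therefore $M_{2} = 2p^{m} - 1$ for $p \equiv 1 \pmod 4$ and $M_{2} = 1$ otherwise, giving the stated second-moment formula.

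The same swap for the fourth moment gives $\sum T^{4} = p^{3m} M_{4}$, and I would split $M_4$ by the value $C := x_{1}^{2} + x_{2}^{2} = -(x_{3}^{2} + x_{4}^{2})$. When $C = 0$ the second-moment analysis forces the per-pair Frobenius identities to vanish independently on each pair, contributing $M_{2}^{2}$. When $C \ne 0$ I would parameterize each conic $u^{2} + v^{2} = C_{0}$ via $s := u + \iota v$ inside $\mathbb{F}_{p^m}[\iota]$, which equals $\mathbb{F}_{p^m}$ when $p \equiv 1 \pmod 4$ and $\mathbb{F}_{p^{2m}}$ otherwise. In the $p \equiv 1 \pmod 4$ case, $x^{p^{k}+1} + y^{p^{k}+1} = (C s^{p^{k}-1} + C^{p^{k}} s^{1-p^{k}})/2$, and the matching condition $B + B' = 0$ factors as $(a - a')(aa' - C^{p^{k}-1}) = 0$ with $a = s^{p^{k}-1}$; since $b := s^{p^{2k}-1} = a^{p^{k}+1}$, the analogous condition on the $A$-sums is automatically implied. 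An inclusion-exclusion count, using $|\ker(s \mapsto s^{p^{k}-1})| = p - 1$ and the fact that the image $H = \{s^{p^{k}-1}\}$ has odd order $(p^{m}-1)/(p-1)$ (so squaring is a bijection on $H$), yields $(p-1)(2p^{m} - p - 1)$ pairs $(s, s')$ per $C$; summing over $C \in \mathbb{F}_{p^m}^{\ast}$ and adding $M_{2}^{2}$ recovers the formula $(2p^m - 1)^2 + (p-1)(p^m-1)(2p^m-p-1)$.

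The main obstacle is the $p \equiv 3 \pmod 4$ case. Here $s$ lives on a norm-$C$ fiber in $\mathbb{F}_{p^{2m}}^{\ast}$ of size $p^{m} + 1$ and the nontrivial Galois action of $\mathbb{F}_{p^{2m}}/\mathbb{F}_{p^m}$ must be carried through the computation. Crucially, the Frobenius $x \mapsto x^{p^k}$ sends $\iota$ to $-\iota$ when $k$ is odd, which alters the factorization of $B + B' = 0$ to $(u + u')(uu' + C^{p^{k}+1}) = 0$ with $u = s^{p^{k}+1}$; moreover the corresponding factorization for $A + A' = 0$ is no longer automatically implied by the $B$-condition. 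A full four-way sub-case analysis is then required, counting intersections of cosets of $\ker \psi$ with the norm-one subgroup (of size $p + 1$ or $p^{2} - 1$ depending on the parity of $k$), and verifying that the final count $(p+1)(2p^{m} - p + 1)$ per $C \ne 0$ is uniform in both parities. Combined with $M_{2}^{2} = 1$ from Case A, this produces the stated formula $1 + (p + 1)(p^{m} - 1)(2p^{m} - p + 1)$.
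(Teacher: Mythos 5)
Your reduction of both moments to the point counts $M_2$ and $M_4$ via character orthogonality is exactly the paper's first step, and your treatment of the second moment is complete and correct in both congruence classes. For the fourth moment with $p\equiv 1\pmod 4$ your argument is also complete, and it takes a genuinely different route from the paper's: the paper (in the case it writes out) fixes the whole triple of power sums $(\overline a,\overline b,\overline c)$, determines the full distribution of the per-pair solution counts (its Lemmas \ref{4.1} and \ref{4.2}, via the substitution $\theta=x_1-tx_2$), and then sums $N_{1}N_{2}$ over the triple; you fix only $C=x_1^2+x_2^2$, parametrize each conic by $s=u+\iota v$, and count matching pairs $(s,s')$ directly by inclusion--exclusion on the factorization $(a-a')(aa'-C^{p^k-1})=0$, using that the $d_2$-condition is implied by the $d_1$-condition and that squaring is a bijection on the image of $s\mapsto s^{p^k-1}$ (which has odd order $(p^m-1)/(p-1)$). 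I checked that this yields $(p-1)(2p^m-p-1)$ matching pairs per nonzero $C$, plus $M_2^2=(2p^m-1)^2$ from $C=0$, which is the stated value; this is arguably cleaner than the paper's method because it never needs the full distribution of $N_{1(b,c)}$.

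The gap is the fourth moment for $p\equiv 3\pmod 4$ --- which happens to be the only case the paper actually proves (Lemma \ref{Le:2.5} and the entire Appendix). There you do not give a proof: you state that a ``full four-way sub-case analysis is required'' and that one must verify that the count $(p+1)(2p^m-p+1)$ per nonzero $C$ is ``uniform in both parities'' of $k$, but you neither perform the analysis nor the verification. This is not a routine omission. On the norm fiber $s\bar s=C$ in $\mathbb{F}_{p^{2m}}^{\ast}$ the algebraic shape of the $d_1$- and $d_2$-conditions genuinely differs according to whether $\iota^{p^k}=\iota$ or $-\iota$; when $k$ is odd the two conditions no longer reduce to one another (as you yourself note), and the relevant quantities such as $\gcd(p^k+1,p^m+1)$ depend on the parity of $k$, so the claimed parity-independent answer must emerge from, not precede, the case analysis. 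That count is precisely the content of the paper's Lemmas \ref{4.1} and \ref{4.2}. Until it is carried out, the $p\equiv 3\pmod 4$ fourth-moment formula --- half of the lemma, and the half the paper's own argument rests on --- remains unproved.
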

We prove this lemma only for the case that $p\equiv 3 \pmod 4$. The proof for the case that $p\equiv 1 \pmod 4$ is similar and omitted. To prove this, the following two lemmas are necessary.
\begin{lemma}\label{Le:2.4}
Let $p \equiv 3 \pmod 4$ and let $N_{2}$ denote the number of solutions $(x_{1},x_{2}) \in \mathbb{F}_{p^{m}}^{2}$ of the following system of equations
\begin{equation*}
\begin{cases}
x_{1}^{2}+ x_{2}^{2}=0\\
x_{1}^{p^{k}+1}+x_{2}^{p^{k}+1}=0\\
x_{1}^{p^{2k}+1}+x_{2}^{p^{2k}+1}=0.
\end{cases}
\end{equation*}
Then $N_{2}=1$.
\end{lemma}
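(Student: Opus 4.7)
The plan is to observe that in this setting the very first equation $x_1^2+x_2^2=0$ already forces $(x_1,x_2)=(0,0)$, so the other two equations will not be needed at all.

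First I would note that $(0,0)$ trivially satisfies all three equations, so $N_2\ge 1$. To prove the reverse inequality, suppose for contradiction that $(x_1,x_2)\ne(0,0)$ is a solution. If exactly one coordinate vanished, then $x_1^2+x_2^2=0$ would immediately force the other to vanish as well; so both $x_1,x_2$ are nonzero. Setting $u=x_2/x_1\in\mathbb{F}_{p^m}^{\ast}$, the first equation becomes $u^2=-1$, i.e., $-1$ is a square in $\mathbb{F}_{p^m}$.

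The key step is then to rule this out using the hypothesis $p\equiv 3\pmod 4$ together with $m$ odd (a standing assumption of the paper). Since $\mathbb{F}_{p^m}^{\ast}$ is cyclic of order $p^m-1$, the element $-1$ is a square in $\mathbb{F}_{p^m}$ if and only if $(p^m-1)/2$ is even, i.e., $p^m\equiv 1\pmod 4$. But $p\equiv 3\pmod 4$ and $m$ odd give $p^m\equiv 3^m\equiv 3\pmod 4$, so $-1$ is a non-square. This contradicts the existence of $u$ with $u^2=-1$, so no nontrivial solution exists and $N_2=1$.

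I do not anticipate any real obstacle here: the statement is in essence a remark about the non-existence of a square root of $-1$ in $\mathbb{F}_{p^m}$ when $p\equiv 3\pmod 4$ and $m$ is odd. The two higher-degree equations play no role in the count; they will become relevant only when one handles the analogous lemma for the case $p\equiv 1\pmod 4$ (where the first equation does admit nontrivial solutions), which the authors note is treated similarly and omitted.
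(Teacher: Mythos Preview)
Your proof is correct and follows exactly the paper's approach: the paper's entire proof is the one-line remark that the system has only the solution $(0,0)$ because $-1$ is a non-square when $p\equiv 3\pmod 4$ (with $m$ odd as a standing hypothesis). You have simply spelled out this observation in full detail, including the standard fact that $-1$ is a square in $\mathbb{F}_{p^m}$ iff $p^m\equiv 1\pmod 4$.
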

\begin{proof}
This system of equations have only one solution $(0,0)$, since  $-1$ is a  non-square when $p \equiv 3 \pmod 4$.
\end{proof}
\begin{lemma}\label{Le:2.5}
Let $p \equiv 3 \pmod 4$ and let $N_{4}$ denote the number of solutions $(x_{1},x_{2},x_{3},x_{4}) \in \mathbb{F}_{p^{m}}^{4}$ of the following system of equations
\[
\begin{cases}
x_{1}^{2}+x_{2}^{2}+ x_{3}^{2}+ x_{4}^{2}=0\\
x_{1}^{p^{k}+1}+x_{2}^{p^{k}+1}+ x_{3}^{p^{k}+1}+ x_{4}^{p^{k}+1}=0\\
x_{1}^{p^{2k}+1}+x_{2}^{p^{2k}+1}+ x_{3}^{p^{2k}+1}+ x_{4}^{p^{2k}+1}=0.
\end{cases}
\]
Then $N_{4}=(p+1)(p^{m}-1)(2p^{m}-p+1)+1$.
%$=2p^{2m+1}+2p^{2m}-p^{m+2}-2p^{m-1}-p^{m}+p^{2}$
\end{lemma}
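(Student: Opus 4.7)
The plan is to compute $N_4$ by stratifying the solutions by the number of nonzero entries among $(x_1,x_2,x_3,x_4)$. The zero tuple contributes $1$. If exactly one entry is nonzero, say $x_i$, the first equation reduces to $x_i^2=0$, forcing $x_i=0$, a contradiction. If exactly two entries $x_i,x_j$ are nonzero, the three equations restrict to the system of Lemma~\ref{Le:2.4} on the pair $(x_i,x_j)$, whose only solution is $(0,0)$; hence this case also contributes nothing.

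For the remaining support sizes $3$ and $4$, I pair the four variables as $(x_1,x_2)$ and $(x_3,x_4)$ and write
\[
N_4 \;=\; \sum_{(u,v,w)\in\mathbb{F}_{p^m}^3} M(u,v,w)\,M(-u,-v,-w),
\]
where $M(u,v,w)$ counts $(a,b)\in\mathbb{F}_{p^m}^2$ with $a^2+b^2=u$, $a^{p^k+1}+b^{p^k+1}=v$, and $a^{p^{2k}+1}+b^{p^{2k}+1}=w$. Lemma~\ref{Le:2.4} gives $M(0,0,0)=1$, producing the ``$+1$'' in the target formula.

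For nonzero $(u,v,w)$ I identify each nonzero pair $(a,b)$ with $z=a+bi\in\mathbb{F}_{p^{2m}}^{\ast}$, using the isomorphism $\mathbb{F}_{p^{2m}}=\mathbb{F}_{p^m}(i)$ available because $-1$ is a non-square in $\mathbb{F}_{p^m}$ when $p\equiv 3\pmod 4$ and $m$ is odd. Then $a^2+b^2=N(z)$, the norm $\mathbb{F}_{p^{2m}}\to\mathbb{F}_{p^m}$, and a direct computation (substituting $\bar z=N(z)/z$ on each norm fiber) converts the other two components of $M$ into traces of monomials in $z$; the precise exponents depend on the parity of $k$. The sum over nonzero $(u,v,w)$ thereby becomes a count of pairs $(z_1,z_2)\in(\mathbb{F}_{p^{2m}}^{\ast})^2$ satisfying $N(z_2)=-N(z_1)$ together with two trace identities in $\mathbb{F}_{p^{2m}}/\mathbb{F}_{p^m}$.

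The main obstacle is this final count. For fixed $z_1$, the constraint $N(z_2)=-N(z_1)$ pins $z_2$ to a coset of the norm-one subgroup $\mathbb{H}\subset\mathbb{F}_{p^{2m}}^{\ast}$ (cyclic of order $p^m+1$), and the two trace identities cut out a subset whose size I would evaluate via orthogonality on $\mathbb{H}$ or an elementary Gauss/Kloosterman-type computation, branching on the parity of $k$ but giving the same total. Summing over $z_1\in\mathbb{F}_{p^{2m}}^{\ast}$ yields $(p+1)(p^m-1)(2p^m-p+1)$, which combined with the trivial contribution $1$ gives $N_4=(p+1)(p^m-1)(2p^m-p+1)+1$ as claimed.
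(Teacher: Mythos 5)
Your setup is the same as the paper's: you write $N_4=\sum_{(u,v,w)}M(u,v,w)M(-u,-v,-w)$, peel off the contribution $M(0,0,0)^2=1$ via Lemma~\ref{Le:2.4}, and propose to handle a nonzero pair $(a,b)$ through the identification $z=a+bi\in\mathbb{F}_{p^{2m}}^{\ast}$ (which is exactly the paper's substitution $\theta=x_1-tx_2$ with $t^2=-1$). The problem is that everything after that point is a plan rather than a proof. You write that you ``would evaluate via orthogonality on $\mathbb{H}$ or an elementary Gauss/Kloosterman-type computation'' and then assert that the sum over $z_1$ ``yields $(p+1)(p^m-1)(2p^m-p+1)$.'' That final count is the entire content of the lemma, and it is not carried out. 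In particular you never explain (i) why the fiber count for the normalized system $a^2+b^2=1$, $a^{p^k+1}+b^{p^k+1}=b_0$ is exactly $p+1$ when $b_0=1$ and $0$ or $2(p+1)$ otherwise --- in the paper this requires solving $w^2-2b_0w+1=0$ for $w=\theta^{p^k+1}$, splitting into the cases $\theta^{p^m+1}=1$ versus $\theta^{p^m-1}=1$, and using $\gcd(p^k+1,p^m+1)=p+1$, which is where the hypotheses $m$ odd and $\gcd(m,k)=1$ enter; your sketch never invokes these hypotheses, so it cannot be complete; (ii) why the third equation (the $p^{2k}+1$ one) does not further cut down the solution set --- the paper needs the observation that $c$ is uniquely determined by $b$ on the nonempty fibers, and without this your ``two trace identities'' could a priori over-constrain the coset of the norm-one subgroup; and (iii) why the number of $b$ with $2(p+1)$ solutions equals $\frac{p^m-p}{2(p+1)}$, which is what makes the arithmetic $(p+1)^2+\bigl(2(p+1)\bigr)^2\cdot\frac{p^m-p}{2(p+1)}=(p+1)(2p^m-p+1)$ come out.

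Two smaller points. Your opening stratification by support size is redundant: the convolution formula already accounts for all support patterns, and the support-$2$ case you dispose of via Lemma~\ref{Le:2.4} only covers the pairs $\{x_1,x_2\}$ and $\{x_3,x_4\}$ anyway (a pair split across the two blocks, e.g.\ $x_1,x_3\neq0$, is not an instance of Lemma~\ref{Le:2.4} restricted to one block). This does no harm but also does no work. More importantly, deferring the decisive computation to an unspecified ``orthogonality or Kloosterman-type'' argument, with the answer filled in from the statement being proved, leaves the proof with a genuine gap at its only nontrivial step.
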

\begin{proof}
See Appendix.
\end{proof}
Now we are ready to prove Lemma \ref{Le:2.3} in the case of $p \equiv 3 \pmod 4$.

\noindent $\mathbf{Proof}$ $\mathbf{of}$ $\mathbf{Lemma}$ \ref{Le:2.3}. From Eq. (\ref{Eq:2.1}), we have

\begin{equation*}
\begin{split}
&\sum_{(\alpha,\beta,\gamma) \in \mathbb{F}_{p^{m}}^{3}}T^{2}(\alpha,\beta,\gamma)\\
&=\sum_{(\alpha,\beta,\gamma) \in \mathbb{F}_{p^{m}}^{3}}\sum_{x_{1} \in  \mathbb{F}_{p^{m}}}\zeta_{p}^{Tr(\alpha x_{1}^{p^{2k}+1}+\beta x_{1}^{p^{k}+1}+\gamma x_{1}^{2})}\sum_{x_{2} \in  \mathbb{F}_{p^{m}}}\zeta_{p}^{Tr(\alpha x_{2}^{p^{2k}+1}+\beta x_{2}^{p^{k}+1}+\gamma x_{2}^{2})}\\
&=\sum_{(x_{1},x_{2}) \in \mathbb{F}_{p^{m}}^{2}}\sum_{\alpha \in \mathbb{F}_{p^{m}}}\zeta_{p}^{Tr(\alpha(x_{1}^{p^{2k}+1}+x_{2}^{p^{2k}+1}))}\sum_{\beta \in \mathbb{F}_{p^{m}}}\zeta_{p}^{Tr(\beta(x_{1}^{p^{k}+1}+x_{2}^{p^{k}+1}))}\sum_{\gamma \in \mathbb{F}_{p^{m}}}\zeta_{p}^{Tr(\gamma(x_{1}^{2}+x_{2}^{2}))}\\
&=p^{3m}\#W,
\end{split}
\end{equation*}
where
\begin{equation*}
W=\{(x_{1},x_{2}) \in \mathbb{F}_{p^{m}}^{2}: x_{1}^{2}+x_{2}^{2}=0, x_{1}^{p^{k}+1}+x_{2}^{p^{k}+1}=0,x_{1}^{p^{2k}+1}+x_{2}^{p^{2k}+1}=0\}.
\end{equation*}
Then by Lemma \ref{Le:2.4}, we have
\begin{equation}\label{Eq:2.3}
\sum_{(\alpha,\beta,\gamma) \in \mathbb{F}_{p^{m}}^{3}}T^{2}(\alpha,\beta,\gamma)=p^{3m}.
\end{equation}
Similarly, by Lemmas \ref{Le:2.5}, we have
\begin{equation}\label{Eq:2.4}
\sum_{(\alpha,\beta,\gamma) \in \mathbb{F}_{p^{m}}^{3}}T^{4}(\alpha,\beta,\gamma)=p^{3m}(1+(p+1)(p^{m}-1)(2p^{m}-p+1)).
\end{equation}

\section{\small{RESULTS ON EXPONENTIAL SUMS AND THE WEIGHT DISTRIBUTION OF THE CYCLIC CODE}}
We follow the notation and conditions fixed in Section 1 and 2.
\begin{theorem}\label{Th:3.1}
Let $m \geq 5$ be an odd integer, $k$ be any positive integer  such that  gcd$(m,k)=1$. Then the value distribution of the multi-set $\{T(\alpha,\beta,\gamma)\mid \alpha,\beta,\gamma \in \mathbb{F}_{p^{m}}\}$ is given by Table 1.
\end{theorem}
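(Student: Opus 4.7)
The plan is to apply Lemma~\ref{Le:2.1}(1) to rewrite each $T(\alpha,\beta,\gamma)$ as $\left(\frac{\Delta_{\alpha,\beta,\gamma}}{p}\right)(p^{\ast})^{r/2}p^{m-r}$, so that the value depends only on the rank $r$ of $H_{\alpha,\beta,\gamma}$ and on the Legendre symbol of $\Delta_{\alpha,\beta,\gamma}$. By Lemma~\ref{Le:2.2}(1), the ranks for nonzero triples lie in $\{m-4,m-3,m-2,m-1,m\}$, and the zero triple contributes $T(0,0,0)=p^m$. Writing $N_r^{\pm}$ for the number of triples with rank $r$ and $\left(\frac{\Delta}{p}\right)=\pm 1$, the first reduction is to note that the bijection $(\alpha,\beta,\gamma)\mapsto(c\alpha,c\beta,c\gamma)$ with $c$ a non-square in $\mathbb{F}_p^{\ast}$ preserves rank but multiplies $\Delta$ by $c^{r}$, so it swaps the two Legendre signs whenever $r$ is odd. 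This forces $N_r^{+}=N_r^{-}$ for $r\in\{m,m-2,m-4\}$, collapsing the number of unknowns for the odd ranks.

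For the totals $N_m$, $N_{m-2}$, $N_{m-4}$, I would form a $3\times 3$ linear system from three identities: the total-count $\sum_r N_r=p^{3m}-1$ and the power sums $\sum T^{2}$ and $\sum T^{4}$ from Lemma~\ref{Le:2.3}. Since $T^{2}$ and $T^{4}$ are independent of the Legendre sign, these yield clean equations once the known values $n_1,n_3$ from Lemma~\ref{Le:2.2}(2) are substituted, and solving gives the three unknown totals. To split $n_1$ and $n_3$ further into $N_{m-1}^{\pm}$ and $N_{m-3}^{\pm}$, I would invoke the known weight distribution of the subcode $\mathcal{C}'$ supplied by \cite{20,22}. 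A direct orthogonality computation shows $\sum_{a\in\mathbb{F}_p^{\ast}}T(a\alpha,a\beta,a\gamma)=T(\alpha,\beta,\gamma)\sum_{a}\left(\frac{a}{p}\right)^{r}$, which vanishes for $r$ odd and equals $(p-1)T(\alpha,\beta,\gamma)$ for $r$ even. Consequently, the codeword $\mathbf{c}'_{(\alpha,\beta,\gamma)}\in\mathcal{C}'$ has weight $(p-1)p^{m-1}$ when $r$ is odd, and weight $(p-1)p^{m-1}-\tfrac{p-1}{p}T(\alpha,\beta,\gamma)$ when $r$ is even; since in the even case the two values $T_r^{\pm}$ produce two distinct weights, reading off their multiplicities from the known weight distribution of $\mathcal{C}'$ pins down $N_{m-1}^{\pm}$ and $N_{m-3}^{\pm}$.

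The main obstacle is the sign bookkeeping: the explicit form of $T_r^{\pm}$ depends on $p\bmod 4$ and on the parity of $r/2$, and the value $(p^{\ast})^{r/2}$ for odd $r$ further involves $\sqrt{p^{\ast}}$, so Table~1 naturally splits into subcases and the matching of $T$-values to $\mathcal{C}'$-weights must be performed consistently within each subcase. Once the correspondence is fixed, the remaining work is routine linear algebra.
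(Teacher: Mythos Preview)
Your proposal is correct and follows essentially the same approach as the paper: use Lemma~\ref{Le:2.1}(1) to identify the possible values of $T$, invoke the non-square scaling $(\alpha,\beta,\gamma)\mapsto(c\alpha,c\beta,c\gamma)$ to get $n_{1,i}=n_{-1,i}$ for odd rank $m-i$ (the paper phrases this as ``the same method in \cite{9}''), and then solve a $3\times 3$ linear system from the total count together with $\sum T^{2}$ and $\sum T^{4}$ (Lemma~\ref{Le:2.3}) to pin down $n_{1,0},n_{1,2},n_{1,4}$. The only cosmetic difference is that for the even-rank splits $n_{\pm 1,1},n_{\pm 1,3}$ the paper simply quotes the explicit values from \cite{20,22}, whereas you propose to recover them from the weight distribution of $\mathcal{C}'$ via the identity $W(\mathbf{c}'_{(\alpha,\beta,\gamma)})=(p-1)p^{m-1}-\tfrac{p-1}{p}T(\alpha,\beta,\gamma)$ for even rank; since that weight distribution is precisely what \cite{20,22} supply, the two routes are equivalent.
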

\begin{proof}
By Lemma \ref{Le:2.1}, for $\varepsilon =\pm 1$ and $0\leq i\leq 4$, we define
\[
n_{\varepsilon,i}=\begin{cases}
\#\{(\alpha,\beta,\gamma) \in \mathbb{F}_{p^{m}}^{3}\backslash \{(0,0,0)\} \mid T(\alpha,\beta,\gamma)=\varepsilon p^{\frac{m+i}{2}}\}& m-i~even,\\
\#\{(\alpha,\beta,\gamma) \in \mathbb{F}_{p^{m}}^{3}\backslash \{(0,0,0)\} \mid T(\alpha,\beta,\gamma)=\varepsilon \sqrt{p^{\ast}}p^{\frac{m+i-1}{2}}\}& m-i~odd.
\end{cases}
\]
\begin{itemize}
  \item $i= 1, 3$. In this case, $m-i$ is even. According to the results of \cite{22} and \cite{20}, we can obtain
  \[
  \begin{split}
  &n_{1,1}=\frac{(p^{m+1}+p^{(m+3)/2})(p^{2m}-p^{2m-2}-p^{2m-3}+p^{m-2}+p^{m-3}-1)}{2(p^{2}-1)},\\
  &n_{-1,1}=\frac{(p^{m+1}-p^{(m+3)/2})(p^{2m}-p^{2m-2}-p^{2m-3}+p^{m-2}+p^{m-3}-1)}{2(p^{2}-1)},\\
  &n_{1,3}=\frac{(p^{m-3}+p^{(m-3)/2})(p^{m-1}-1)(p^{m}-1)}{2(p^{2}-1)},\\
  &n_{-1,3}=\frac{(p^{m-3}-p^{(m-3)/2})(p^{m-1}-1)(p^{m}-1)}{2(p^{2}-1)}.
  \end{split}
  \]
  \item $i= 0,2,4$. In this case, $m-i$ is odd. By the same method in \cite{9}, we also have
  \[
  n_{1,i}=n_{-1,i}=\frac{1}{2}n_{i}.
  \]
Moreover, we have
\begin{equation*}
\begin{split}
&\sum_{(\alpha,\beta,\gamma) \in \mathbb{F}_{p^{m}}^{3}}T^{2}(\alpha,\beta,\gamma)\\
&=p^{2m}+(n_{1,0}+n_{-1,0})p^{\ast}p^{m-1}+(n_{1,1}+n_{-1,1})p^{m+1}+(n_{1,2}+n_{-1,2})p^{\ast}p^{m+1}\\
 &\relphantom{=} {}+(n_{1,3}+n_{-1,3})p^{m+3}+(n_{1,4}+n_{-1,4})p^{\ast}p^{m+3}\\
 &=p^{2m}+2n_{1,0}p^{\ast}p^{m-1}+n_{1}p^{m+1}+2n_{1,2}p^{\ast}p^{m+1}
+n_{3}p^{m+3}+2n_{1,4}p^{\ast}p^{m+3}
\end{split}
\end{equation*}
and similarly,
\begin{equation*}
\begin{split}
&\sum_{(\alpha,\beta,\gamma) \in \mathbb{F}_{p^{m}}^{3}}T^{4}(\alpha,\beta,\gamma)\\
&=p^{4m}+2n_{1,0}p^{2m}+n_{1}p^{2(m+1)}+2n_{1,2}p^{2(m+2)}
+n_{3}p^{2(m+3)}+2n_{1,4}p^{2(m+4)}.
\end{split}
\end{equation*}
By the two equations above, Lemma \ref{Le:2.2} and \ref{Le:2.3}, we can obtain the following system of equations:
 %value of $n_{\varepsilon,i}$ for $\varepsilon =\pm 1$ and $ i=0,2,4$.
\begin{equation*}
\begin{cases}
2n_{1,0}+2n_{1,2}+2n_{1,4}=(p^{m}-1)(p^{2m}-p^{2m-1}+p^{2m-4}+p^{m}-p^{m-1}-p^{m-3}+1) \\
2n_{1,0}+2n_{1,2}p^{2}+2n_{1,4}p^{4}=p^{3m}-p^{2m}\\
2n_{1,0}+2n_{1,2}p^{4}+2n_{1,4}p^{8}=p^{3m+1}+p^{3m}-p^{2m+2}-p^{2m+1}-(p^{2m}+p^{m+2}.\\
\end{cases}
\end{equation*}
By solving this system of equations, we can have
\[
\begin{split}
&n_{1,0}=n_{-1,0}=\frac{1}{2}(p^{m}-1)(p^{2m}-p^{2m-1}+p^{2m-4}+p^{m}-p^{m-1}-p^{m-3}+1)\\
  &\relphantom{=} {}-n_{1,2}-n_{1,4},\\
&n_{1,2}=n_{-1,2}=\frac{p^{2}(p^{m-1}-1)(p^{2m}-p^{2m-2}-p^{2m-3}+p^{m-2}+p^{m-3}-1)}{2(p^{2}-1)^{2}},\\
&n_{1,4}=n_{-1,4}=\frac{(p^{m}-1)(p^{m-1}-1)(p^{m-3}-1)}{2(p^{2}-1)(p^{4}-1)}.
\end{split}
\]
\end{itemize}
The proof is completed.
\end{proof}
\begin{table}[htbp]
\caption{Value Distribution of $T(\alpha,\beta,\gamma)$}
\centering
\begin{tabular}{ll}
 \hline
 Value& Frequency\\
 \hline
 $\sqrt{p^{\ast}}p^{\frac{m-1}{2}}, -\sqrt{p^{\ast}}p^{\frac{m-1}{2}}$ & $n_{1,0}(n_{-1,0})$\\
 $p^{\frac{m+1}{2}}$ & $n_{1,1}$\\
 $-p^{\frac{m+1}{2}}$ & $n_{-1,1}$\\
 $\sqrt{p^{\ast}}p^{\frac{m+1}{2}}, -\sqrt{p^{\ast}}p^{\frac{m+1}{2}}$ & $n_{1,2}(n_{-1,2})$\\
 $p^{\frac{m+3}{2}}$ & $n_{1,3}$\\
 $-p^{\frac{m+3}{2}}$ & $n_{-1,3}$\\
  $\sqrt{p^{\ast}}p^{\frac{m+3}{2}}, -\sqrt{p^{\ast}}p^{\frac{m+3}{2}}$ & $n_{1,4}(n_{-1,4})$\\
 $p^{m}$ & 1\\
 \hline
\end{tabular}
\end{table}
Until now, we have determined the value distribution of the multi-set $\{T(\alpha,\beta,\gamma)\mid \alpha,\beta,\gamma \in \mathbb{F}_{p^{m}}\}$. The value distribution of the multi-set $\{S(\alpha,\beta,\gamma,\delta)\mid \alpha,\beta,\gamma,\delta \in \mathbb{F}_{p^{m}}\}$ can be determined by the following theorem.
\begin{theorem}\label{Th:3.2}
Let $m \geq 5$ be an odd integer, $k$ be any positive integer  such that  gcd$(m,k)=1$. Then the value distribution of the multi-set $\{S(\alpha,\beta,\gamma,\delta)\mid \alpha,\beta,\gamma,\delta \in \mathbb{F}_{p^{m}}\}$ is given by Table 2.
\end{theorem}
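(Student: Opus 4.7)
The plan is to use Lemma~\ref{Le:2.1}(2) to express $S(\alpha,\beta,\gamma,\delta)$ in terms of $T(\alpha,\beta,\gamma)$, and then aggregate using Theorem~\ref{Th:3.1}. For fixed $(\alpha,\beta,\gamma)$, we have $S(\alpha,\beta,\gamma,\delta)=0$ unless $A_{\delta}$ lies in the row space of $H_{\alpha,\beta,\gamma}$, in which case $S(\alpha,\beta,\gamma,\delta)=\zeta_{p}^{t}\,T(\alpha,\beta,\gamma)$, where $t=\frac{1}{2}A_{\delta}B'$ for any solution $B\in\mathbb{F}_{p}^{m}$ of $2YH_{\alpha,\beta,\gamma}+A_{\delta}=0$. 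First I would verify that $t$ is independent of the choice of $B$: differences of two solutions lie in the null space of the symmetric matrix $H_{\alpha,\beta,\gamma}$, and so pair trivially with any $A_{\delta}$ in its row space. Substituting $A_{\delta}=-2BH_{\alpha,\beta,\gamma}$ then simplifies the exponent to $t=-F_{\alpha,\beta,\gamma}(B)\bmod p$.

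Next, for fixed $(\alpha,\beta,\gamma)$ with $\mathrm{rank}(H_{\alpha,\beta,\gamma})=r$, the map $\delta\mapsto A_{\delta}$ is an $\mathbb{F}_{p}$-linear isomorphism $\mathbb{F}_{p^{m}}\to\mathbb{F}_{p}^{m}$, so exactly $p^{r}$ of the $p^{m}$ values of $\delta$ produce a nonzero $S$, while the remaining $p^{m}-p^{r}$ give $S=0$. Parametrizing the good $\delta$'s by $B\in\mathbb{F}_{p}^{m}$ (each good $A_{\delta}$ arising from a coset of size $p^{m-r}$ in the null space), the number of $\delta$ giving $S=\zeta_{p}^{-c}T(\alpha,\beta,\gamma)$ equals
\[
\frac{1}{p^{m-r}}\,\#\{B\in\mathbb{F}_{p}^{m}:F_{\alpha,\beta,\gamma}(B)=c\}.
\]
Standard counts for quadratic equations over $\mathbb{F}_{p}$ yield this cardinality in closed form; the formula splits according to the parity of $r$, whether $c=0$ or $c\neq 0$, and the Legendre symbol $(\frac{\triangle}{p})$ of the discriminant of $F_{\alpha,\beta,\gamma}$. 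By Lemma~\ref{Le:2.1}(1), $(\frac{\triangle}{p})$ is precisely the sign that distinguishes $+T(\alpha,\beta,\gamma)$ from $-T(\alpha,\beta,\gamma)$ in each rank class of Table~1.

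Finally, I would aggregate the local counts across all $(\alpha,\beta,\gamma)$ using the frequencies $n_{\pm 1,i}$ of Table~1 to obtain the multiplicity of each potential value of $S$. The case $(\alpha,\beta,\gamma)=(0,0,0)$ is handled directly: $H_{\alpha,\beta,\gamma}=0$ forces the good-$\delta$ condition to be $A_{\delta}=0$, i.e.\ $\delta=0$, giving $S(0,0,0,0)=p^{m}$ and $S(0,0,0,\delta)=0$ otherwise. The main obstacle is the bookkeeping: the eight nonzero sign-and-rank classes from Table~1 each split further into quadratic-counting subcases (parity of $r$, value $c=0$ versus $c\neq 0$), and one must consistently track $(\frac{\triangle}{p})$ and the parity of $r$ to align the resulting multiplicities with Table~2. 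The underlying arithmetic itself is routine once the parametrization and discriminant tracking are in place.
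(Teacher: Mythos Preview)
Your proposal is correct and follows essentially the same approach as the paper: both use Lemma~\ref{Le:2.1}(2) to write $S(\alpha,\beta,\gamma,\delta)=\zeta_{p}^{t}\,T(\alpha,\beta,\gamma)$ (or $0$), reduce the count of $\delta$'s producing each $t$ to the standard level-set count of the quadratic form $F_{\alpha,\beta,\gamma}$, and then aggregate against the frequencies $n_{\varepsilon,i}$ from Theorem~\ref{Th:3.1}. The paper simply cites \cite{9} for this computation and records the resulting formulas for $n_{\varepsilon,i,j}$ and $\omega$ directly, whereas you have spelled out the mechanics (well-definedness of $t$, the coset parametrization by $B$, and the discriminant tracking) that \cite{9} supplies.
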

\begin{proof}
By Lemma \ref{Le:2.1}, $S(\alpha,\beta,\gamma,\delta)$  takes values from the set
\[
\{0, \pm \zeta_{p}^{j} p^{\frac{m+i}{2}}, \pm \zeta_{p}^{j} \sqrt{p^{\ast}}p^{\frac{m+i-1}{2}}: j \in \mathbb{F}_{p}, 0\leq i\leq 4\}.
 \] Then for $\varepsilon=\pm1$, we define
\[
n_{\varepsilon,i,j}=\begin{cases}
\#\{(\alpha,\beta,\gamma,\delta) \in \mathbb{F}_{p^{m}}^{3}\backslash \{(0,0,0,0)\} \mid S(\alpha,\beta,\gamma,\delta)=\varepsilon \zeta_{p}^{j} p^{\frac{m+i}{2}}\}& m-i~even,\\
\#\{(\alpha,\beta,\gamma,\delta) \in \mathbb{F}_{p^{m}}^{3}\backslash \{(0,0,0,0)\} \mid S(\alpha,\beta,\gamma,\delta)=\varepsilon \zeta_{p}^{j} \sqrt{p^{\ast}}p^{\frac{m+i-1}{2}}\}& m-i~odd.
\end{cases}
\]
and
\[\omega=\#\{(\alpha,\beta,\gamma,\delta) \in \mathbb{F}_{p^{m}}^{3}\backslash \{(0,0,0,0)\} \mid S(\alpha,\beta,\gamma,\delta)=0\}.\]
By the same method in \cite{9}, we have the following results.
\begin{itemize}
  \item $i= 1, 3$. In this case, $m-i$ is even. For $\varepsilon=\pm 1$ and $j \in \mathbb{F}_{p}^{\ast}$, we have
  \[
  \begin{split}
  &n_{\varepsilon,i,0}=(p^{m-i-1}+\varepsilon(p-1)p^{\frac{m-i-2}{2}})n_{\varepsilon,i},\\
  &n_{\varepsilon,i,j}=(p^{m-i-1}-\varepsilon p^{\frac{m-i-2}{2}})n_{\varepsilon,i}.
  \end{split}
  \]
  \item $i= 0,2,4$. In this case, $m-i$ is odd.  For $\varepsilon=\pm 1$ and $j \in \mathbb{F}_{p}^{\ast}$, we also have
  \[
  \begin{split}
  &n_{\varepsilon,i,0}=p^{m-i-1}n_{\varepsilon,i},\\
  &n_{\varepsilon,i,j}=(p^{m-i-1}+\varepsilon(\frac{-j}{p}) p^{\frac{m-i-1}{2}})n_{\varepsilon,i}.
  \end{split}
  \]
Furthermore, we have
\[
\omega=p^{m}-1+(p^{m}-p^{m-1})n_{1}+(p^{m}-p^{m-2})n_{2}+(p^{m}-p^{m-3})n_{3}+(p^{m}-p^{m-4})n_{4}.
\]
\end{itemize}
Summarizing the discussion above completes the proof of this theorem.
\end{proof}
\begin{table}[htbp]
\caption{Value Distribution of $S(\alpha,\beta,\gamma,\delta)$}
\centering
\begin{tabular}{ll}
 \hline
 Value& Frequency\\
 \hline
 $\sqrt{p^{\ast}}p^{\frac{m-1}{2}}, -\sqrt{p^{\ast}}p^{\frac{m-1}{2}}$ & $n_{1,0,0}(n_{-1,0,0})$\\
 $\zeta_{p}^{j}\sqrt{p^{\ast}}p^{\frac{m-1}{2}}, j \in \mathbb{F}_{p}^{\ast}$ & $n_{1,0,j}$\\
 $-\zeta_{p}^{j}\sqrt{p^{\ast}}p^{\frac{m-1}{2}}, j \in \mathbb{F}_{p}^{\ast}$ & $n_{-1,0,j}$\\
 $p^{\frac{m+1}{2}}$ & $n_{1,1,0}$\\
 $-p^{\frac{m+1}{2}}$ & $n_{-1,1,0}$\\
  $\zeta_{p}^{j}p^{\frac{m+1}{2}}, j \in \mathbb{F}_{p}^{\ast}$ & $n_{1,1,j}$\\
  $-\zeta_{p}^{j}p^{\frac{m+1}{2}}, j \in \mathbb{F}_{p}^{\ast}$ & $n_{-1,1,j}$\\
 $\sqrt{p^{\ast}}p^{\frac{m+1}{2}}, -\sqrt{p^{\ast}}p^{\frac{m+1}{2}}$ & $n_{1,2,0}(n_{-1,2,0})$\\
 $\zeta_{p}^{j}\sqrt{p^{\ast}}p^{\frac{m+1}{2}}, j \in \mathbb{F}_{p}^{\ast}$ & $n_{1,2,j}$\\
 $-\zeta_{p}^{j}\sqrt{p^{\ast}}p^{\frac{m+1}{2}}, j \in \mathbb{F}_{p}^{\ast}$ & $n_{-1,2,j}$\\
 $p^{\frac{m+3}{2}}$ & $n_{1,3,0}$\\
 $-p^{\frac{m+3}{2}}$ & $n_{-1,3,0}$\\
 $\zeta_{p}^{j}p^{\frac{m+3}{2}}, j \in \mathbb{F}_{p}^{\ast}$ & $n_{1,3,j}$\\
  $-\zeta_{p}^{j}p^{\frac{m+3}{2}}, j \in \mathbb{F}_{p}^{\ast}$ & $n_{-1,3,j}$\\
  $\sqrt{p^{\ast}}p^{\frac{m+3}{2}}, -\sqrt{p^{\ast}}p^{\frac{m+3}{2}}$ & $n_{1,4,0}(n_{-1,4,0})$\\
  $\zeta_{p}^{j}\sqrt{p^{\ast}}p^{\frac{m+3}{2}}, j \in \mathbb{F}_{p}^{\ast}$ & $n_{1,4,j}$\\
  $-\zeta_{p}^{j}\sqrt{p^{\ast}}p^{\frac{m+3}{2}}, j \in \mathbb{F}_{p}^{\ast}$ & $n_{-1,4,j}$\\
  0& $\omega$\\
 $p^{m}$ & 1\\
 \hline
\end{tabular}
\end{table}

\begin{theorem}\label{Th:3.11}
Let $m\geq 5$ be an odd integer and $k$ be any positive integer such that gcd$(m,k)=1$.
$\mathcal{C}_{(p,m,k)}$ is a  cyclic code over $\mathbb{F}_{p}$ with parameters $[p^{m}-1, 4m, (p-1)(p^{m-1}-p^{\frac{m+3}{2}})]$. Moreover, the weight distribution of $\mathcal{C}_{(p,m,k)}$ is given in Table 3.
\end{theorem}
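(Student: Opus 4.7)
The plan is to translate the value distribution of the exponential sums $S(\alpha,\beta,\gamma,\delta)$ established in Theorem~\ref{Th:3.2} into the weight distribution of $\mathcal{C}_{(p,m,k)}$. The bridge is the standard formula expressing the Hamming weight as a character sum.

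First I would show that for each $U = (\alpha,\beta,\gamma,\delta) \in \mathbb{F}_{p^m}^{4}$, the Hamming weight of $\mathbf{c}_U$ is
\[
w_H(\mathbf{c}_U) = (p-1)p^{m-1} - \frac{1}{p} \sum_{y \in \mathbb{F}_p^{\ast}} S(y\alpha, y\beta, y\gamma, y\delta).
\]
This follows by counting $\{x \in \mathbb{F}_{p^{m}}^{\ast} : Tr(\alpha x^{p^{2k}+1}+\beta x^{p^{k}+1}+\gamma x^{2}+\delta x)=0\}$ via additive character orthogonality on $\mathbb{F}_p$, subtracting the contribution of $x=0$, and using that $Tr$ is $\mathbb{F}_p$-linear so the inner substitution $y\cdot Tr(\cdot)$ coincides with $S$ evaluated at $(y\alpha,y\beta,y\gamma,y\delta)$.

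Second, I would recognise the inner sum as a rational Galois trace. Since $S(yU)=\sigma_y(S(U))$, where $\sigma_y \in \mathrm{Gal}(\mathbb{Q}(\zeta_p)/\mathbb{Q})$ sends $\zeta_p \mapsto \zeta_p^{y}$, it follows that
\[
\sum_{y \in \mathbb{F}_p^{\ast}} S(yU) \;=\; \mathrm{Tr}_{\mathbb{Q}(\zeta_p)/\mathbb{Q}}\bigl(S(U)\bigr).
\]
For each value type listed in Table~2 this trace is directly computable: the rational value $\varepsilon p^{(m+i)/2}$ contributes $\varepsilon(p-1)p^{(m+i)/2}$; $\varepsilon\zeta_p^{j}p^{(m+i)/2}$ with $j\neq 0$ contributes $-\varepsilon p^{(m+i)/2}$; $\varepsilon\sqrt{p^{\ast}}\,p^{(m+i-1)/2}$ contributes $0$; and $\varepsilon\zeta_p^{j}\sqrt{p^{\ast}}\,p^{(m+i-1)/2}$ with $j\neq 0$ contributes $\varepsilon\bigl(\tfrac{-j}{p}\bigr)p^{(m+i+1)/2}$, obtained from the standard Gauss-sum identity $\sum_{y\in\mathbb{F}_p^{\ast}}\bigl(\tfrac{y}{p}\bigr)\zeta_p^{yj}=\bigl(\tfrac{j}{p}\bigr)\sqrt{p^{\ast}}$ together with $(\tfrac{-1}{p})\sqrt{p^{\ast}}^{\,2}=p$. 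The cases $S=0$ and $S=p^{m}$ (which occurs only at $U=0$) are immediate.

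Third, substituting each trace back into the weight formula produces the full list of attained weights. I would then aggregate the frequencies in Table~2 according to which weight they yield, splitting every $j\neq 0$ subtype into two halves of size $\tfrac{p-1}{2}$ indexed by the quadratic character $\bigl(\tfrac{-j}{p}\bigr)=\pm 1$. This gives the entries of Table~3. The minimum distance is then identified as the smallest nonzero weight; a direct comparison shows it arises from the subtype $S(U)=\zeta_p^{j}\sqrt{p^{\ast}}\,p^{(m+3)/2}$ with the favourable sign and residue choices, corresponding to the rank-$(m-4)$ quadratic forms.

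The principal obstacle is pure bookkeeping: several distinct subtypes of Table~2 collapse onto the same weight in Table~3, and each $j\neq 0$ subtype must be split according to its quadratic character before the multiplicities $n_{\varepsilon,i,j}$ from Theorem~\ref{Th:3.2} can be combined correctly. As a consistency check, one verifies that the frequencies in Table~3 sum to $p^{4m}$, matching the total number of codewords of the $4m$-dimensional code.
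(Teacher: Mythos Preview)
Your proposal is correct and follows essentially the same route as the paper: derive the weight formula $w_H(\mathbf{c}_U)=(p-1)p^{m-1}-\tfrac{1}{p}\sum_{y\in\mathbb{F}_p^\ast}S(yU)$, evaluate the inner sum case by case from the value of $S(U)$ in Table~2, and then aggregate into Table~3. The only stylistic difference is that you package the computation of $R(\alpha,\beta,\gamma,\delta)=\sum_y S(yU)$ as the cyclotomic trace $\mathrm{Tr}_{\mathbb{Q}(\zeta_p)/\mathbb{Q}}(S(U))$ via $S(yU)=\sigma_y(S(U))$, whereas the paper simply cites the method of \cite{9} and lists the resulting values of $R$; the underlying Gauss-sum identities and the final case analysis are identical.
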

\begin{proof}
According to the discussion in  section 1,
the length and dimension of $\mathcal{C}_{(p,m,k)}$ are clearly.
In terms of exponential sums, the weight of the codeword $\mathbf{c}_{(\alpha,\beta,\gamma,\delta)}=(c_{0}, c_{1},\ldots,c_{p^{m}-2})$ in $\mathcal{C}_{(p,m,k)}$ is given by
\begin{equation*}
\begin{split}
 W(\mathbf{c}_{(\alpha,\beta,\gamma,\delta)})
 &= \#\{0\leq t\leq p^{m}-2: c_{t}\neq 0\}\\
  &= p^{m}-1-\frac{1}{p}\sum_{t=0}^{p^{m}-2}\sum_{y \in \mathbb{F}_{p}}\zeta_{p}^{yc(t)} \\
   &=  p^{m}-1-\frac{1}{p}\sum_{t=0}^{p^{m}-2}\sum_{y \in \mathbb{F}_{p}}\zeta_{p}^{yTr(\alpha\pi^{(p^{2k}+1)t}+\beta\pi^{(p^{k}+1)t}+\gamma\pi^{2t}+\delta\pi^{t})}\\
      &=   p^{m}-1- \frac{1}{p}\sum_{y \in \mathbb{F}_{p}}\sum_{x \in \mathbb{F}_{p^{m}}^{\ast}}\zeta_{p}^{yTr(\alpha x^{p^{2k}+1}+\beta x^{p^{k}+1}+\gamma x^{2}+\delta x)}\\
   &=   p^{m-1}(p-1)-\frac{1}{p}\sum_{y \in \mathbb{F}_{p}^{\ast}}\sum_{x \in \mathbb{F}_{p^{m}}}\zeta_{p}^{yTr(\alpha x^{p^{2k}+1}+\beta x^{p^{k}+1}+\gamma x^{2}+\delta x)}\\
   &=   p^{m-1}(p-1)-\frac{1}{p}R(\alpha,\beta,\gamma,\delta),
\end{split}
\end{equation*}
where
\[
R(\alpha,\beta,\gamma,\delta)=\sum_{y \in \mathbb{F}_{p}^{\ast}}S(y\alpha,y\beta,y\gamma,y\delta).
%=\sum_{y \in \mathbb{F}_{p}^{\ast}}\sigma_{y}(S(\alpha,\beta,\gamma,\delta))
 \]
 Again by the same method in \cite{9}, we have the following results.
\begin{itemize}
  \item $i= 1, 3$. In this case, $m-i$ is even. For $\varepsilon=\pm 1$ and $j \in \mathbb{F}_{p}^{\ast}$, we have
  \begin{itemize}
     \item  if $S(\alpha,\beta,\gamma,\delta)=\varepsilon p^{\frac{m+i}{2}}$, then
             \[
                  R(\alpha,\beta,\gamma,\delta)=\varepsilon (p-1) p^{\frac{m+i}{2}};
              \]
     \item if $S(\alpha,\beta,\gamma,\delta)=\varepsilon \zeta_{p}^{j} p^{\frac{m+i}{2}}$, then
              \[
                R(\alpha,\beta,\gamma,\delta)=-\varepsilon p^{\frac{m+i}{2}}.
              \]
   \end{itemize}
  \item $i= 0,2,4$. In this case, $m-i$ is odd.  For $\varepsilon=\pm 1$ and $j \in \mathbb{F}_{p}^{\ast}$, we also have
    \begin{itemize}
     \item  if $S(\alpha,\beta,\gamma,\delta)=\varepsilon\sqrt{p^{\ast}}  p^{\frac{m+i-1}{2}}$, then
             \[
                  R(\alpha,\beta,\gamma,\delta)=0;
              \]
     \item if $S(\alpha,\beta,\gamma,\delta)=\varepsilon \zeta_{p}^{j}\sqrt{p^{\ast}} p^{\frac{m+i-1}{2}}$, then
              \[
                R(\alpha,\beta,\gamma,\delta)=\varepsilon (\frac{-j}{p})p^{\frac{m+i+1}{2}}.
              \]
   \end{itemize}
 \end{itemize}
Summarizing the discussion above, together with Theorem \ref{Th:3.2}, the proof is completed.
\end{proof}
\begin{table}[htbp]
\caption{Weight Distribution of $\mathcal{C}_{(p,m,k)}$}
\centering
\begin{tabular}{ll}
 \hline
 Weight& Frequency\\
 \hline
 $0$ &1\\
 $(p-1)p^{m-1}$ & $\omega+2n_{1,0,0}+2n_{1,2,0}+2n_{1,4,0}$\\
 $(p-1)p^{m-1}-p^{\frac{m-1}{2}}$ &  $(p-1)n_{(\frac{-1}{p}),0,1}+(p-1)n_{-1,1,1}$\\
 $(p-1)p^{m-1}+p^{\frac{m-1}{2}}$ &  $(p-1)n_{-(\frac{-1}{p}),0,1}+(p-1)n_{1,1,1}$\\
 $(p-1)p^{m-1}-(p-1)p^{\frac{m-1}{2}}$ &  $n_{1,1,0}$\\
 $(p-1)p^{m-1}+(p-1)p^{\frac{m-1}{2}}$ &  $n_{-1,1,0}$\\
 $(p-1)p^{m-1}-p^{\frac{m+1}{2}}$ &  $(p-1)n_{(\frac{-1}{p}),2,1}+(p-1)n_{-1,3,1}$\\
 $(p-1)p^{m-1}+p^{\frac{m+1}{2}}$ &  $(p-1)n_{-(\frac{-1}{p}),2,1}+(p-1)n_{1,3,1}$\\
 $(p-1)p^{m-1}-(p-1)p^{\frac{m+1}{2}}$ & $n_{1,3,0}$\\
  $(p-1)p^{m-1}+(p-1)p^{\frac{m+1}{2}}$ & $n_{-1,3,0}$\\
 $(p-1)p^{m-1}-p^{\frac{m+3}{2}}$ & $(p-1)n_{(\frac{-1}{p}),4,1}$\\
 $(p-1)p^{m-1}+p^{\frac{m+1}{2}}$ &$(p-1)n_{-(\frac{-1}{p}),4,1}$\\
 \hline
\end{tabular}
\end{table}
The following is an example of these codes.
\begin{example}
Let $p=3$, $m=5$ and $k=1$. The the code $\mathcal{C}_{(3,5,1)}$ is a $[242,20,81]$ cyclic code over $\mathbb{F}_{3}$ with weight enumerator
\[
\begin{split}
&1+484z^{81}+72600z^{108}+ 6853440z^{135}+84092580z^{144}+947952720z^{153}+1618713316z^{162}\\
&+782825472z^{171}+ 42810768z^{180}+3455760z^{189}+7260z^{216}.
\end{split}
\]
which is completely in agreement with the results presented in Table 3.
%which confirms the weight distribution in Table 3.
\end{example}
%\begin{example}
%Let $p=3$, $m=7$ and $k=1$. The the code $\mathcal{C}_{(3,7,1)}$ is a $[2186,21,1296]$ cyclic code over $\mathbb{F}_{3}$ with weight enumerator
%\[
%\begin{split}
%&1+8951670z^{1296}+1732767876z^{1404}+7102473578z^{1458}+1608998742z^{1512}\\
%&+7161336z^{1620}.
%\end{split}
%\]
%which is completely in agreement with the results presented in Table 2.
%which confirms the weight distribution in Table 3.
%\end{example}
\begin{center}
    APPENDIX
\end{center}
In the following discussion, let $d_{1}(d_{2}, resp.)$ denote $p^{k}+1(p^{2k}+1, resp.)$.

\noindent $\mathbf{Proof}$ $\mathbf{of}$ $\mathbf{Lemma}$ \ref{Le:2.5}.
For any $(\overline{a},\overline{b},\overline{c}) \in \mathbb{F}_{p^{m}}^{3}$, let $N_{1(\overline{a},\overline{b},\overline{c})}$ and $N_{2(\overline{a},\overline{b},\overline{c})}$ denote the number of solutions  of the following two system of equations
\begin{equation}\label{4.16}
\begin{cases}
x_{1}^{2}+x_{2}^{2}=\overline{a}\\
x_{1}^{d_{1}}+x_{2}^{d_{1}}=\overline{b}\\
x_{1}^{d_{2}}+x_{2}^{d_{2}}=\overline{c}\\
\end{cases}
\end{equation}
\begin{equation}\label{4.17}
\begin{cases}
x_{3}^{2}+x_{4}^{2}=-\overline{a}\\
x_{3}^{d_{1}}+x_{4}^{d_{1}}=-\overline{b}\\
x_{3}^{d_{2}}+x_{4}^{d_{2}}=-\overline{c}.\\
\end{cases}
\end{equation}
Then we have
 \[N_{4}=\sum_{(\overline{a},\overline{b},\overline{c}) \in \mathbb{F}_{p^{m}}^{3}}N_{1(\overline{a},\overline{b},\overline{c})}N_{2(\overline{a},\overline{b},\overline{c})}.\]
\noindent Case 1, when $\overline{a}=0$. In this case, (\ref{4.16}) and (\ref{4.17}) have solutions if and only if $\overline{b}=\overline{c}=0$  since $-1$ is a non-square. Moreover, $N_{1(0,0,0)}=N_{2(0,0,0)}=1$.

\noindent Case 2, when $\overline{a}\neq 0$. In this case,  if $\overline{b}=0$ or $\overline{c}=0$, neither (\ref{4.16}) nor (\ref{4.17}) has solutions. So in the following, we consider the problem only when $\overline{b}\neq 0$ and $\overline{c}\neq 0$.
%In this case, it is clear that $N_{1(\overline{a},\overline{b},\overline{c})}=N_{2(\overline{a},\overline{b},\overline{c})}=0$.
\begin{itemize}
  \item $\overline{a}$ is a nonzero square, $\overline{b}\neq 0$ and $\overline{c}\neq 0$.
 In this case, for any fixed $\overline{a}$, (\ref{4.16}) has the same number of  solutions as
\begin{equation}\label{4.18}
\begin{cases}
x_{1}^{2}+x_{2}^{2}=1\\
x_{1}^{d_{1}}+x_{2}^{d_{1}}=b\\
x_{1}^{d_{2}}+x_{2}^{d_{2}}=c\\
\end{cases}
\end{equation}
and  (\ref{4.17}) has the same number of  solutions as
\begin{equation}\label{4.19}
\begin{cases}
x_{3}^{2}+x_{4}^{2}=-1\\
x_{3}^{d_{1}}+x_{4}^{d_{1}}=-b\\
x_{3}^{d_{2}}+x_{4}^{d_{2}}=-c,\\
\end{cases}
\end{equation}
where $b=\overline{b}/\overline{a}^{d_{1}}$ and $c=\overline{c}/\overline{a}^{d_{2}}$. Clearly, $(b,c)$ runs through $\mathbb{F}_{p^{m}}^{*2}$ as $(\overline{b},\overline{c})$ does. According to the proof of  Lemma \ref{4.1} and \ref{4.2}, it can be easy to see that for any fixed $(b, c) \in \mathbb{F}_{p^{m}}^{*2}\setminus\{(1,1)\}$ such that (\ref{4.18}) have $2(p+1)$ solutions, then for $(-b, -c)$, (\ref{4.19}) also have $2(p+1)$ solutions. Therefore, in this case we have
\begin{equation*}
\begin{split}
&\sum_{(\overline{a},\overline{b},\overline{c}) \in \mathbb{F}_{p^{m}}^{*3}}N_{1(\overline{a},\overline{b},\overline{c})}N_{2(\overline{a},\overline{b},\overline{c})}\\
&=(p^{m}-1)\Big\{(p+1)^{2}+(2(p+1))^{2}\frac{p^{m}-p}{2(p+1)}\Big\}\\
&=(p+1)(p^{m}-1)(2p^{m}-p+1).
\end{split}
\end{equation*}
  \item $\overline{a}$ is a non-square.
 %$\overline{b}\neq 0$ and $\overline{c}\neq 0$.
 In this case, for any fixed $\overline{a}$,  (\ref{4.16}) has the same number of  solutions as
\begin{equation*}
\begin{cases}
x_{1}^{2}+x_{2}^{2}=-1\\
x_{1}^{d_{1}}+x_{2}^{d_{1}}=-b\\
x_{1}^{d_{2}}+x_{2}^{d_{2}}=-c\\
\end{cases}
\end{equation*}
and  equation system (\ref{4.17}) has the same number of  solutions as
\begin{equation*}
\begin{cases}
x_{3}^{2}+x_{4}^{2}=1\\
x_{3}^{d_{1}}+x_{4}^{d_{1}}=b\\
x_{3}^{d_{2}}+x_{4}^{d_{2}}=c.\\
\end{cases}
\end{equation*}
It can be easily seen that this case is equivalent to the case when $\overline{a}$ is a nonzero square. So  when $\overline{a}$ is a non-square, we also have
\begin{equation*}
\begin{split}
&\sum_{(\overline{a},\overline{b},\overline{c}) \in \mathbb{F}_{p^{m}}^{*3}}N_{1(\overline{a},\overline{b},\overline{c})}N_{2(\overline{a},\overline{b},\overline{c})}\\
&=(p+1)(p^{m}-1)(2p^{m}-p+1).
\end{split}
\end{equation*}
\end{itemize}
Summarizing  the two cases above, we have $N_{4}=(p+1)(p^{m}-1)(2p^{m}-p+1)+1$.
\begin{lemma}\label{4.1}
Let $N_{1(b,c)}$ denote the number of solutions $(x_{1}, x_{2}) \in \mathbb{F}_{p^{m}}^{2}$ of (\ref{4.18}), where $(b,c) \in \mathbb{F}_{p^{m}}^{*2}$. Then we have the following conclusions.
\begin{enumerate}
  \item  $N_{1(1,1)}=p+1$;\\
  \item When $(b,c)$ runs through $\mathbb{F}_{p^{m}}^{*2}\setminus\{(1,1)\}$,
\begin{equation*}
N_{1(b,c)}=
\begin{cases}
2(p+1),& for~ \frac{p^{m}-p}{2(p+1)}~ times,\\
0, & for~ the ~rest.
\end{cases}
\end{equation*}
 \end{enumerate}
\end{lemma}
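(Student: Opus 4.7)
The plan is to parametrize the affine conic $x_1^2 + x_2^2 = 1$ by a norm torus in $\mathbb{F}_{p^{2m}}^{*}$, which converts the remaining two equations of (\ref{4.18}) into power equations on a single cyclic group. Since $p \equiv 3 \pmod 4$ and $m$ is odd, $-1$ is a non-square in $\mathbb{F}_{p^m}$, so I fix $i \in \mathbb{F}_{p^{2m}}$ with $i^2 = -1$ and set $y = x_1 + i x_2$. Because $i^{p^m} = -i$ (as $m$ is odd), $y \cdot y^{p^m} = x_1^2 + x_2^2$, so solutions to $x_1^2 + x_2^2 = 1$ with $(x_1, x_2) \in \mathbb{F}_{p^m}^2$ correspond bijectively to $y \in \mu_{p^m+1} := \{z \in \mathbb{F}_{p^{2m}}^{*} : z^{p^m+1} = 1\}$; in particular the conic has exactly $p^m + 1$ points.

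Using $i^{p^k} = (-1)^k i$, a direct expansion gives $b = \frac{1}{2}(y^s + y^{-s})$ and $c = \frac{1}{2}(y^t + y^{-t})$, where $s = p^k - 1$ for $k$ even, $s = p^k + 1$ for $k$ odd, and $t = p^{2k} - 1$ (since $2k$ is always even). Standard gcd calculations together with $\gcd(m, k) = 1$ and $m$ odd give $\gcd(p^m + 1, s) = \gcd(p^m + 1, t) = p + 1$, so also $\gcd(p^m + 1, s, t) = p + 1$, and moreover $p + 1 \mid s$ and $p + 1 \mid t$. Part (1) is then immediate: $b = c = 1$ forces $y^s = y^t = 1$, so $y \in \mu_{p+1}$, giving $N_{1(1,1)} = p + 1$.

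For part (2), fix $(b, c) \ne (1, 1)$ in $\mathbb{F}_{p^m}^{*2}$. The conditions force $y^s \in \{u_0, u_0^{-1}\}$ and $y^t \in \{v_0, v_0^{-1}\}$ with $u_0 + u_0^{-1} = 2b$, $v_0 + v_0^{-1} = 2c$; for each of the four combinations the solutions form either the empty set or a coset of $\mu_{p+1}$, of size $p + 1$. The involution $y \leftrightarrow y^{-1}$ pairs $\{(u_0, v_0), (u_0^{-1}, v_0^{-1})\}$ together and $\{(u_0, v_0^{-1}), (u_0^{-1}, v_0)\}$ together, so each pair is simultaneously realisable or simultaneously not. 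The crucial claim is that both pairs cannot be realised at once: if they could, then some $\eta \in \mu_{p+1}$ would satisfy $\eta^{t} = v_0^{-2}$, but $z \mapsto z^t$ is trivial on $\mu_{p+1}$ (since $p + 1 \mid t$), forcing $v_0^2 = 1$ and hence $c = \pm 1$. A symmetric argument together with the parity observation that $(p^m + 1)/(p + 1)$ is odd (so $-1$ is \emph{not} in the image of $z \mapsto z^s$ on $\mu_{p^m+1}$) rules out every exceptional $(b, c) \in \{\pm 1\}^2 \setminus \{(1, 1)\}$. Thus every fiber over $(b, c) \ne (1, 1)$ in $\mathbb{F}_{p^m}^{*2}$ has size $0$ or $2(p + 1)$.

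Finally, summing fiber sizes over the conic gives $p^m + 1$; subtracting $p + 1$ for the fiber over $(1, 1)$ leaves $p^m - p$, which splits into exactly $(p^m - p)/(2(p + 1))$ fibers of size $2(p + 1)$. This also confirms \emph{a posteriori} that no $y$ gives $b(y) = 0$ or $c(y) = 0$, so restricting to $\mathbb{F}_{p^m}^{*2}$ loses nothing. The main obstacle is the case analysis of the preceding paragraph: establishing that at most one of the two involution-pairs is realisable relies delicately on the two gcd computations and on the parity of $(p^m + 1)/(p + 1)$, and the boundary values $b, c \in \{\pm 1\}$ must be disposed of separately.
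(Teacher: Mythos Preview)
Your proof is correct and follows the same strategy as the paper: parametrize the conic $x_1^2+x_2^2=1$ by the norm-one torus $\mu_{p^m+1}\subset\mathbb{F}_{p^{2m}}^{*}$ via $y=x_1+ix_2$, and reduce to power equations on that cyclic group, using $\gcd(p^m+1,s)=\gcd(p^m+1,t)=p+1$.

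Two small differences are worth flagging. First, you correctly distinguish $s=p^k+1$ ($k$ odd) from $s=p^k-1$ ($k$ even); the paper writes only $\theta^{p^k+1}+\theta^{-(p^k+1)}=2b$ and invokes $\gcd(p^k+1,p^m+1)=p+1$, which is literally valid only for odd $k$ (one can replace $k$ by $m-k$ to reduce to that case, but the paper does not say so). Second, your involution/pairing argument showing that at most one of the two coset-pairs can be realized is more elaborate than needed: since $p+1\mid t$, the value $y^t$---and hence $c$---is constant on each $\mu_{p+1}$-coset of $\mu_{p^m+1}$, so $c$ is already uniquely determined by $b$. This is the paper's shortcut (``$c$ is uniquely determined by $b$''), and it immediately gives the $0$-or-$2(p+1)$ dichotomy without having to separately dispose of the boundary cases $b,c\in\{\pm1\}$.
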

\begin{proof}
We first compute the number $N_{1(b)}$ of solutions $(x_{1}, x_{2}) \in \mathbb{F}_{p^{m}}^{2}$ of the following system of equations
\begin{equation}\label{20}
\begin{cases}
x_{1}^{2}+x_{2}^{2}=1\\
x_{1}^{d_{1}}+x_{2}^{d_{1}}=b.\\
\end{cases}
\end{equation}
When $p\equiv 3 \pmod 4$, $-1$ is a non-square in $ \mathbb{F}_{p^{m}}$. However, we can
choose $t \in \mathbb{F}_{p^{2m}}$ such that $t^{2}=-1$. From the first equation of (\ref{20}), by setting $\theta =x_{1}-tx_{2} \in \mathbb{F}_{p^{2m}}^{*}$, we can have
\begin{equation}\label{21}
x_{1}=\frac{\theta + \theta^{-1}}{2},
x_{2}=\frac{t(\theta - \theta^{-1})}{2}.
\end{equation}
Substituting (\ref{21}) into the second equation of (\ref{20}), we obtain
\begin{equation}\label{22}
\theta^{p^{k}+1}+\theta^{-p^{k}-1}=2b.
\end{equation}
Denote $\theta^{p^{k}+1}$ by $w$, Eq. (\ref{22}) is equivalent to
\begin{equation}\label{23}
w^{2}-2bw+1=0.
\end{equation}
If Eq. (\ref{23}) has no solution, i.e., $b^{2}-1$ is a non-square of $\mathbb{F}_{p^{2m}}^{*}$, then $N_{b}=0$. Otherwise, let $w_{1}$ and $w_{2}=w_{1}^{-1}$ be two solutions of (\ref{23}).
Since $x_{1} \in  \mathbb{F}_{p^{m}}$, then the following holds:
\begin{equation*}
\frac{\theta+\theta^{-1}}{2}=(\frac{\theta+\theta^{-1}}{2})^{p^{m}}=\frac{\theta^{p^{m}}+\theta^{-p^{m}}}{2},
\end{equation*}
which implies $\theta^{p^{m}+1}=1$ or $\theta^{p^{m}-1}=1$.
\begin{itemize}
  \item If $\theta^{p^{m}+1}=1$, then $x_{2}^{p^{m}}=(\frac{t(\theta - \theta^{-1})}{2})^{p^{m}}=\frac{t^{p^{m}}(\theta^{p^{m}}-\theta^{-p^{m}})} {2}=\frac{t(\theta - \theta^{-1})}{2}=x_{2}$ since $t^{p^{m}}=-t$.  And then $x_{2} \in \mathbb{F}_{p^{2m}}$. According to the discussion above, we have
      \begin{equation}\label{24}
      \theta^{p^{k}+1}=w_{1}, \theta^{p^{m}+1}=1,
      \end{equation} or
      \begin{equation}\label{25}
      \theta^{p^{k}+1}=w_{1}^{-1}, \theta^{p^{m}+1}=1.
      \end{equation}
      If $\theta_{1}$ and $\theta_{2}$ are two solutions of (\ref{24}), then $(\theta_{1}/\theta_{2})^{p^{k}+1}=1=(\theta_{1}/\theta_{2})^{p^{m}+1}$. Observe that gcd$(p^{k}+1, p^{m}+1)=p+1$, then $(\theta_{1}/\theta_{2})^{p+1}=1$. So if (\ref{24}) has solutions, then it has exactly $p+1$ solutions. If $w_{1}=w_{1}^{-1}$, then (\ref{25}) is the same as (\ref{24}). In this case we have $w_{1}=\pm 1$ and then from Eq. (\ref{23}), $b=\pm 1$. But when $b=-1$, $\theta^{p^{k}+1}=w_{1}=-1$. By $\theta^{p^{m}+1}=1$ and gcd$(p^{m}+1, 2(p^{k}+1))=p+1$, we have $\theta^{p+1}=1$. And then $\theta^{p^{k}+1}=1$, which is a contradiction. So in the following we only consider $b=1$. In this case, $w_{1}=1$ and ($\ref{25}$) have $p+1$ solutions of $\theta$. Moreover, we have $p+1$ solutions of (\ref{20}).  If $w_{1}\neq w_{1}^{-1}$, then (\ref{25}) has the same number of solutions as $(\ref{24})$. Moreover, their solutions are distinct since $w_{1} \neq \pm1$. Therefore, ($\ref{24}$) and ($\ref{25}$) both have $p+1$ solutions or no solutions in $\mathbb{F}_{p^{2m}}$. By (\ref{21}), $(x_{1}, x_{2})$ is uniquely determined by $\theta$. Then (\ref{20}) has $2(p+1)$ solutions or no solutions in $\mathbb{F}_{p^{m}}^{2}$
  \item If $\theta^{p^{m}+1}\neq 1$, then $\theta^{p^{m}-1}=1$. In this case,  $\theta \in \mathbb{F}_{p^{m}}^{*}$.  Since $t \in \mathbb{F}_{p^{2m}}^{*}$, so $x_{2}=\frac{t(\theta - \theta^{-1})}{2} \in \mathbb{F}_{p^{m}}^{*}$ if and only if  $\theta= \theta^{-1}$. Then $\theta^{p^{m}+1}=\theta^{p^{m}-1}\cdot \theta^{2}= 1$ which is a contradiction.
  \end{itemize}
Until now, we have  $N_{1(1)}=p+1$ and $N_{1(b)}=0$ or $2(p+1)$ for $b\neq 1$. And as in Lemma 5.4 in \cite{20},
we define
\begin{equation*}
T=\#\{b \in \mathbb{F}_{p^{m}}: N_{1(b)}=2(p+1)\}.
\end{equation*}
Then we have
  \begin{equation*}
T=\frac{p^{m}-p}{2(p+1)}.
\end{equation*}
%Note that  the first equation of (\ref{20}) has $p^{m}+1$ solutions by Lemma 6.24 in \cite{11}.
%%When $(x_{1}, x_{2})$ runs through all these solutions, the second equation of (\ref{20})will give a $2(p+1)$-to-1 correspondence $(x_{1}, x_{2})\mapsto b=x_{1}^{d_{1}}+x_{2}^{d_{1}}$ if $N_{b}=2(p+1)$. Therefore $(p+1)+2(p+1)T=p^{m}+1$, which implies
Similarly, we also have $c$ is uniquely determined by $b$ and  $c=1$ if and only if $b=1$.
%Hence the number of solutions of (\ref{4.18}) is the same as that of (\ref{20}).
 The proof is finished.
\end{proof}
\begin{lemma}\label{4.2}
Let $N_{2(b,c)}$ denote the number of solutions $(x_{1}, x_{2}) \in \mathbb{F}_{p^{m}}^{2}$ of (\ref{4.19}), where $(b,c) \in \mathbb{F}_{p^{m}}^{*2}$. Then we have the following conclusions.

\noindent (1). $N_{2(1,1)}=p+1$.

\noindent(2). When $(b,c)$ runs through $\mathbb{F}_{p^{m}}^{*2}\setminus\{(1,1)\}$,
\begin{equation*}
N_{2(b,c)}=
\begin{cases}
2(p+1),& for~ \frac{p^{m}-p}{2(p+1)}~ times,\\
0, & for~ the ~rest.
\end{cases}
\end{equation*}
\end{lemma}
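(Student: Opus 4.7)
The plan is to mirror the proof of Lemma \ref{4.1} step for step, adapting the parametrization to the sign change on the right-hand sides of system (\ref{4.19}). First I would reuse the element $t \in \mathbb{F}_{p^{2m}}$ with $t^2 = -1$ (which exists because $-1$ is a non-square in $\mathbb{F}_{p^m}$ when $p \equiv 3 \pmod 4$ and $m$ is odd). For any solution $(x_1,x_2) \in \mathbb{F}_{p^m}^2$ of (\ref{4.19}), set $\theta = x_1 - tx_2 \in \mathbb{F}_{p^{2m}}^*$. From the identity $(x_1 - tx_2)(x_1 + tx_2) = x_1^2 + x_2^2 = -1$ we get $x_1 + tx_2 = -\theta^{-1}$, which yields
\[
x_1 = \frac{\theta - \theta^{-1}}{2}, \qquad x_2 = \frac{t(\theta + \theta^{-1})}{2},
\]
using $t^{-1} = -t$.

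Next I would substitute these expressions into the second equation of (\ref{4.19}). Carrying out the same algebraic manipulations as in Lemma \ref{4.1}, and tracking the extra factor $t^{p^k+1} \in \{\pm 1\}$, the equation reduces to a quadratic of the shape $w^2 \pm 2bw + 1 = 0$ in the variable $w = \theta^{p^k+1}$. Either way, if the roots exist in $\mathbb{F}_{p^{2m}}$ they form a pair $w_1$, $w_1^{-1}$. The constraint $x_1 \in \mathbb{F}_{p^m}$ forces $\theta^{p^m+1} = 1$; the alternative $\theta \in \mathbb{F}_{p^m}^*$ is ruled out exactly as in Lemma \ref{4.1}, because $t \notin \mathbb{F}_{p^m}$ would force $\theta = \theta^{-1}$ and then $\theta^{p^m+1} = \theta^2 = 1$, a contradiction with the alternative. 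Combining $\theta^{p^k+1} = w_i$ with $\theta^{p^m+1} = 1$ and using $\gcd(p^k+1, p^m+1) = p+1$, each root $w_i$ contributes either $p+1$ or $0$ values of $\theta$; the two fibres are disjoint whenever $w_1 \neq w_1^{-1}$.

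From here the case analysis parallels Lemma \ref{4.1}: the exceptional value $w_1 = 1$ corresponds to $b = 1$ and contributes $p+1$ solutions, the value $w_1 = -1$ (which would correspond to $b = -1$) is ruled out because $\theta^{p+1} = 1$ would then force $\theta^{p^k+1} = 1 \neq -1$, and for every other admissible $b$ we get $2(p+1)$ solutions. Invoking the same counting argument as in Lemma~5.4 of \cite{20}, exactly $T = (p^m - p)/(2(p+1))$ values of $b$ lie in the $2(p+1)$-solution class. Finally, examining the third equation of (\ref{4.19}) and arguing as at the end of Lemma \ref{4.1}, once $\theta$ (equivalently $w$) is fixed the value $c$ is uniquely determined by $b$, and $c = 1$ iff $b = 1$, which translates the count on $b$ into the claimed count on $(b,c)$.

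The main obstacle I anticipate is keeping the signs straight in the reduction: the factor $t^{p^k+1}$ equals $+1$ or $-1$ depending on the parity of $k$, so the quadratic in $w$ might be $w^2 - 2bw + 1 = 0$ or $w^2 + 2bw + 1 = 0$. Fortunately both cases are invariant under the involution $w \mapsto w^{-1}$ and are transformed into each other by $b \mapsto -b$, so the final count is insensitive to the choice; nonetheless I would verify the sign bookkeeping carefully before asserting that the two cases give identical counts, and then the conclusion $N_{2(1,1)} = p+1$ together with the $2(p+1)$/$0$ dichotomy with frequency $(p^m-p)/(2(p+1))$ follows.
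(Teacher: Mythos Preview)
Your approach matches the paper's, which simply states that the proof is similar to that of Lemma~\ref{4.1} and omits all details. One concrete slip to flag, though: with your parametrization $x_{1}=(\theta-\theta^{-1})/2$ and $x_{2}=t(\theta+\theta^{-1})/2$, the condition $x_{1}^{p^{m}}=x_{1}$ yields $\theta^{p^{m}+1}=-1$ (not $+1$), and the alternative branch $\theta\in\mathbb{F}_{p^{m}}^{*}$ is excluded because $x_{2}\in\mathbb{F}_{p^{m}}$ would then force $\theta+\theta^{-1}=0$, i.e.\ $\theta^{2}=-1$, which is impossible in $\mathbb{F}_{p^{m}}$ (rather than $\theta=\theta^{-1}$ as you wrote). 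This sign flip propagates into which root $w_{1}\in\{\pm1\}$ corresponds to which value of $b$, so your closing caveat about verifying the sign bookkeeping carefully is exactly on point; once that is carried through, the fibre count via $\gcd(p^{k}+1,p^{m}+1)=p+1$ and the appeal to the analogue of Lemma~5.4 in \cite{20} go through as in Lemma~\ref{4.1} and give the stated result.
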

\begin{proof}
The proof is similar to the proof of the lemma above, so we omit the details.
\end{proof}
%$\mathbf{Acknowledgement}$: The author is  grateful to  Professor Chunlei Liu for his
%unending encouragement and useful advice.

\end{document}